\newcommand{\tabref}[1]{Table~\ref{#1}}
\newcommand{\secref}[1]{Section~\ref{#1}}
\newcommand{\appref}[1]{Appendix~\ref{#1}}
\newcommand{\Gg}{\mathcal{G}}
\DeclareMathOperator*{\argmax}{argmax}
\theoremstyle{definition}
\declaretheorem[numberwithin=subsection,
                name=Theorem,
                refname={theorem,theorems},
                Refname={Theorem,Theorems}]{theorem}
\declaretheorem[sibling=theorem,
                name=Example,
                refname={example,examples},
                Refname={Example,Examples}]{example}
\icmltitlerunning{For Learning in Symmetric Teams, Local Optima are Global Nash Equilibria}
\begin{document}

\twocolumn[
\icmltitle{For Learning in Symmetric Teams, \\
           Local Optima are Global Nash Equilibria}

\icmlsetsymbol{equal}{*}

\begin{icmlauthorlist}
\icmlauthor{Scott Emmons}{ucb}
\icmlauthor{Caspar Oesterheld}{cmu}
\icmlauthor{Andrew Critch}{ucb}
\icmlauthor{Vincent Conitzer}{duke}
\icmlauthor{Stuart Russell}{ucb}
\end{icmlauthorlist}

\icmlaffiliation{ucb}{University of California, Berkeley}
\icmlaffiliation{duke}{Duke University, University of Oxford}
\icmlaffiliation{cmu}{Carnegie Mellon University}

\icmlcorrespondingauthor{Scott Emmons}{emmons@berkeley.edu}

\icmlkeywords{common-payoff game, identical interest game, team theory, Nash equilibrium, symmetric strategy profile}

\vskip 0.3in
]

\printAffiliationsAndNotice{}  %

\begin{abstract}
Although it has been known since the 1970s that a \textit{globally} optimal strategy profile in a common-payoff game is a Nash equilibrium, global optimality is a strict requirement that limits the result's applicability. In this work, we show that any \textit{locally} optimal symmetric strategy profile is also a (global) Nash equilibrium. Furthermore, we show that this result is robust to perturbations to the common payoff and to the local optimum. Applied to machine learning, our result provides a global guarantee for any gradient method that finds a local optimum in symmetric strategy space. While this result indicates stability to \textit{unilateral} deviation, we nevertheless identify broad classes of games where mixed local optima are unstable under \textit{joint}, asymmetric deviations. We analyze the prevalence of instability by running learning algorithms in a suite of symmetric games, and we conclude by discussing the applicability of our results to multi-agent RL, cooperative inverse RL, and decentralized POMDPs.
\end{abstract}

\section{Introduction}
\label{submission}

We consider {\em common-payoff} games (also known as {\em identical interest} games \citep{ui2009bayesian}), in which the payoff to all players is always the same. %
Such games model a wide range of situations involving cooperative action towards a common goal. Under the heading of {\em team theory}, they form an important branch of economics \citep{marschak1955elements,marschak1972economic}. In cooperative AI \citep{dafoe2021cooperative}, the common-payoff assumption holds in \textit{Dec-POMDPs} \citep{oliehoek2016concise}, where multiple agents operate independently according to policies designed centrally to achieve a common objective. Many applications of {\em multiagent reinforcement learning} also assume a common payoff \citep{foerster2016learning,foerster2018counterfactual,gupta2017cooperative}. Finally, in {\em assistance games} \citep{russell2019human} (also known as cooperative inverse reinforcement learning or CIRL games \citep{Hadfield-Menell+al:2017a}), which include at least one human and one or more ``robots,'' it is assumed that the robots' payoffs are exactly the human's payoff, even if the robots must learn it.

Our focus is on symmetric strategy profiles in common-payoff games. Loosely speaking, a symmetric strategy profile is one in which some subset of players share the same strategy; \secref{preliminaries-section} defines this in a precise sense. For example, in Dec-POMDPs, an offline solution search may consider only symmetric strategies as a way of reducing the search space. (Notice that this does not lead to identical {\em behavior}, because strategies are state-dependent.) In common-payoff multiagent reinforcement learning, each agent may collect percepts and rewards independently, but the reinforcement learning updates can be pooled to learn a single parameterized policy that all agents share; prior work has found experimentally that ``parameter sharing is crucial for reaching the optimal protocol'' \citep{foerster2016learning}. In team theory, it is common to develop a strategy that can be implemented by every employee in a given category and leads to high payoff for the company. In civic contexts, symmetry commonly arises through notions of fairness and justice. In treaty negotiations and legislation that mandates how parties behave, for example, there is often a constraint that all parties be treated equally.

For the purposes of this paper, we consider Nash equilibria---strategy profiles for all players from which no individual player has an incentive to deviate---as a reasonable solution concept. \citet{marschak1972economic} make the obvious point that a globally optimal (possibly asymmetric) strategy profile---one that achieves the highest common payoff---is necessarily a Nash equilibrium. Moreover, it can be found in time linear in the size of the payoff matrix.

\begin{table*}[t]
    \begin{subfigure}{0.32\textwidth}
    \centering
    \begin{tabular}{cc|c|c|}
      & \multicolumn{1}{c}{} & \multicolumn{2}{c}{Mobile}\\
      & \multicolumn{1}{c}{} & \multicolumn{1}{c}{H}  & \multicolumn{1}{c}{W} \\\cline{3-4}
      \multirow{2}*{Auto}  & H & 1 & 0 \\\cline{3-4}
                             & W & 2 & 1 \\\cline{3-4}
    \end{tabular}
    \caption{Taxis have different permits}
        \label{laundry-washing-asymmetric}
    \end{subfigure}
    \begin{subfigure}{0.32\textwidth}
    \centering
    \begin{tabular}{cc|c|c|}
      & \multicolumn{1}{c}{} & \multicolumn{2}{c}{Mobile}\\
      & \multicolumn{1}{c}{} & \multicolumn{1}{c}{H}  & \multicolumn{1}{c}{W} \\\cline{3-4}
      \multirow{2}*{Auto}  & H & 1 & 2 \\\cline{3-4}
                             & W & 2 & 1 \\\cline{3-4}
    \end{tabular}
    \caption{Taxis are identical}
        \label{laundry-washing-symmetric-unstable}
    \end{subfigure}
    \begin{subfigure}{0.32\textwidth}
    \centering
    \begin{tabular}{cc|c|c|}
      & \multicolumn{1}{c}{} & \multicolumn{2}{c}{Mobile}\\
      & \multicolumn{1}{c}{} & \multicolumn{1}{c}{H}  & \multicolumn{1}{c}{W} \\\cline{3-4}
      \multirow{2}*{Auto}  & H & 1 & 0 \\\cline{3-4}
                             & W & 0 & 1 \\\cline{3-4}
    \end{tabular}
    \caption{Large groups need both taxis}
        \label{laundry-washing-symmetric-stable}
    \end{subfigure}
\caption{Three versions of the self-driving taxi game. Solutions are described in the text.}
\label{tab:laundry_washing_up_game}
\end{table*}

In any sufficiently complex game, however, we should not expect to be able to find a \textit{globally} optimal strategy profile. For example, matrix games have size exponential in the number of players, and the matrix representation of a game tree has size exponential in the depth of the tree. Therefore, global search over all possible contingency plans is infeasible for all but the smallest of games. This is why some of the most effective methods in machine learning, such as gradient methods, employ \textit{local} search over strategy space.

Lacking global guarantees, local search methods may converge only to \textit{locally} optimal strategy profiles. Roughly speaking, a \textit{locally optimal} strategy profile is a strategy profile from which no group of players has an incentive to slightly deviate. Obviously, a locally optimal profile may not be a Nash equilibrium, as a player may still have an incentive to deviate to some more distant point in strategy space. Nonetheless, \citet{ratliff:2016aa} argue that a local Nash equilibrium may still be stable in a practical sense if agents are computationally unable to find a better strategy.

The central question of this work is: what can we say about the (global) properties of locally optimal symmetric strategy profiles? Our first main result, informally stated, is that {\em in a symmetric, common-payoff game, every local optimum in symmetric strategies is a (global) Nash equilibrium}. \secref{sec:theoretical} states the result more precisely and gives an example illustrating its generality. \secref{sec:robustness} shows that the result is robust to perturbations to the common payoff and to the local optimum. \secref{sec:symmetry_illustration} elaborates on the symmetry required by the result, illustrating how the theorem applies even when the physical environment is asymmetric and when players have differing capabilities. Complete proofs for all of our results are in the appendices.

Despite decades of research on symmetry in common-payoff games \citep{sandholm2001potential,brandt2009symmetries}, our result appears to be novel. There are some echoes of the result in the literature on single-agent decision making \citep{piccione1997interpretation,Briggs2010,Schwarz2015}, which can be connected to symmetric solutions of common-payoff games by treating all players jointly as a single agent, but our result appears more general than published results. Perhaps closest to our work is \citet{piccione1997interpretation}, which establishes an equilibrium-of-sorts among the ``modified multi-selves'' of a single player's information set. The proof we give of our result has similarities with the proof (of a related but different result) in \citet{Taylor2016}.

In the second half of our paper, we turn to the thorny question of \textit{stability}. Instability, if not handled carefully, might lead to major coordination failures in practice \citep{bostrom2016unilateralist}. It is already known that local strict optima in a totally symmetric team game attain one type of stability, but the issue is complex because there are several ways of enforcing (or not enforcing) strict symmetries in payoffs and strategies \citep{milchtaich2016static}. Whereas our first main result implies stability to \textit{unilateral} deviation, our second main result establishes when stability exists to \textit{joint}, possibly-asymmetric,  deviation. We prove for a non-degenerate class of games that local optima in symmetric strategy space fail to be local optima in asymmetric strategy space if and only if at least one player is mixing, and we experimentally quantify how often mixing occurs for learning algorithms in the GAMUT suite of games \citep{nudelman2004run}.

\section{Motivating Examples}
\label{sec:laundry_washing_up}

\begin{figure}[t]
\centering
\includegraphics[width=1.0\columnwidth]{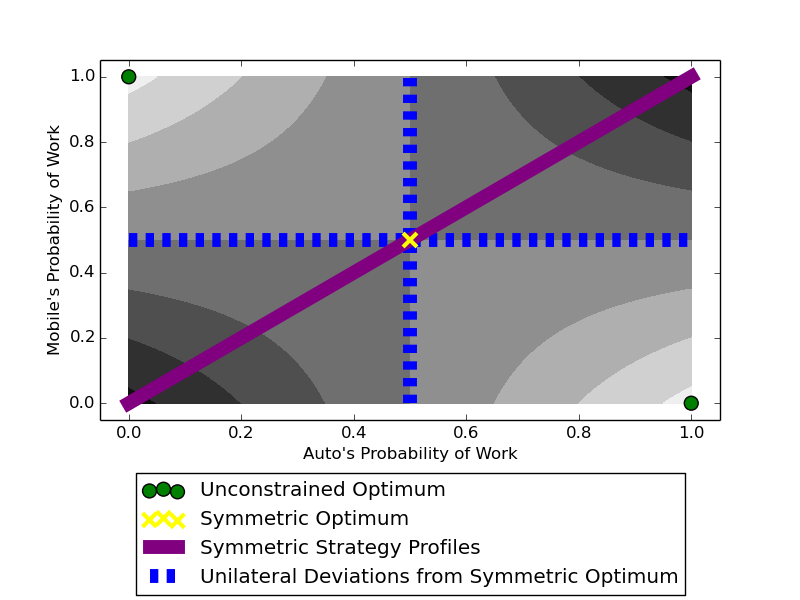}
\caption{The strategy profile landscape of the symmetric self-driving taxi game (Table~\protect\ref{laundry-washing-symmetric-unstable}). Lighter color is higher expected utility. Although the symmetric optimum has lower expected utility than the unrestricted optima, total symmetry of the game implies that the symmetric optimum is a Nash equilibrium; this is a special case of \Cref{thm:efg_equilibrium}.}
\label{fig:anticoordination_landscape}
\end{figure}

To gain some intuition for these concepts and claims, let us consider a situation with two self-driving taxis, Auto and Mobile. Two groups of people need rides: one group needs to go home ($H$), and the other needs to go to work ($W$). It is evident that a symmetric strategy profile---both taxis driving home or both driving to work---is not ideal, because the other trip will not get made.

The first version of the game, whose payoffs $U$ are shown in \tabref{laundry-washing-asymmetric}, is asymmetric: 
Auto only has a work entrance permit, whereas Mobile only has a home entrance permit.
Here, as \citet{marschak1972economic} pointed out, the strategy profile $(W,H)$
is both globally optimal and a Nash equilibrium. If we posit a mixed (randomized) strategy profile in which
Auto and Mobile have work probabilities $p$ and $q$ respectively, the gradients $\partial U/\partial p$ and $\partial U/\partial q$ are
$+1$ and $-1$, driving the solution to $(W,H)$.

In the second version of the game
(\tabref{laundry-washing-symmetric-unstable}), both taxis have both permits, and symmetry is restored. The pure profiles
$(H,W)$ and $(W,H)$ are (asymmetric) globally optimal solutions and
hence Nash equilibria. Figure~\ref{fig:anticoordination_landscape}
shows the entire payoff landscape as a function of $p$ and $q$:
looking just at symmetric strategy profiles, it turns out that there
is a local optimum at $p=q=0.5$, i.e., where Auto and Mobile toss fair
coins to decide what to do. Although the expected payoff of this solution is
lower than that of the asymmetric optima, the local optimum is, nonetheless,
a Nash equilibrium. All unilateral deviations from the symmetric local optimum result in the same expected payoff because if one taxi is tossing a coin, the other taxi can do nothing to improve the final outcome.

In the third version of the game
(\tabref{laundry-washing-symmetric-stable}), both the home and work groups of people are large and need both taxis. In this case, there is again a Nash equilibrium
at $p=q=0.5$, but it is a local minimum rather than a local maximum in symmetric strategy space.
Thus, not all symmetric Nash equilibria are symmetric local optima; this is because
Nash equilibria depend on {\em unilateral} deviations, whereas symmetric local optima
depend on {\em joint} deviations that maintain symmetry.

\subsection{Complex Coordination Example where a Simple Symmetric Strategy is Best}
\label{sec:complex_coordination_example}

Consider 10 robots that must each choose between 3 actions, $a$, $b$, and $c$. If all robots play action $a$, they receive a reward of 1. If exactly one  robot plays action $b$ while the rest play action $c$, they receive a reward of $1 + \epsilon$. Otherwise, the reward is $0$. For small enough $\epsilon$, the optimal symmetric policy is for all robots to play action $a$. Here, trying to coordinate in symmetric strategies to reach the asymmetric optimum is suboptimal---the best symmetric strategy is the simple one. Furthermore, our subsequent theory shows that the best symmetric strategy is stable; it is locally optimal even when considering \textit{joint} (possibly asymmetric) deviations.

\section{Preliminaries: Games and Symmetries} \label{preliminaries-section}

\subsection{Normal-form Games}

Throughout, we consider \textit{normal-form games} $\Gg = (N, A, u)$ defined by a finite set $N$ with $\lvert N \rvert = n$ players, a finite set of action profiles $A = A_1 \times A_2 \times \ldots \times A_n$ with $A_i$ specifying the actions available to player $i$, and the utility function $u = (u_1, u_2, \ldots, u_n)$ with $u_i : A \to \mathbb{R}$ giving the utility for each player $i$ \citep{shoham2008multiagent}. We call $\Gg$ \textit{common-payoff} if $u_i(a) = u_j(a)$ for all action profiles $a \in A$ and all players $i, j$. In common-payoff games we may omit the player subscript $i$ from utility functions.

We model each player as employing a (mixed) \textit{strategy} $s_i \in \Delta(A_i)$, a probability distribution over actions. We denote the support of the probability distribution $s_i$ by $\mathrm{supp}(s_i)$. Given a (mixed) \textit{strategy profile} $s = (s_1, s_2, \ldots, s_n)$ that specifies a strategy for each player, player $i$'s expected utility is $EU_{i}(s) = \sum_{a \in A} u_i(a) \prod_{j=1}^n s_j(a_j)$. If a strategy $s_i$ for player $i$ maximizes expected utility given the strategies $s_{-i}$ of all the other players, i.e., if $s_i \in \argmax_{s_i'\in \Delta(A_i)} EU_{i}(s_i', s_{-i})$, we call $s_i$ a \textit{best response} to $s_{-i}$. If each strategy $s_i$ in a strategy profile $s$ is a best response to $s_{-i}$, we call $s$ a \textit{Nash equilibrium}. A Nash equilibrium $s$ is \textit{strict} if every $s_i$ is the unique best response to $s_{-i}$.

Note that, while we have chosen to use the normal-form game representation for simplicity, normal-form games are highly expressive. Normal-form games can represent mixed strategies in all finite games, including games with sequential actions, stochastic transitions, and partial observation such as imperfect-information extensive form games with perfect recall, Markov games, and Dec-POMDPs. To represent a sequential game in normal form, one simply lets each normal-form action be a complete strategy (contingency plan) accounting for every potential game decision.

\subsection{Symmetry in Game Structure}

We adopt the fairly general group-theoretic notions of symmetry introduced by \citet{von1944theory} and \citet{nash1951non}, and we borrow notation from \citet{plan2017symmetric}. More recent work has analyzed {\em narrower} notions of symmetry
\citep{reny1999existence,vester2012symmetric,milchtaich2016static,li2020structure}. For example, \citet{daskalakis2007anonymous}
study ``anonymous games'' and show that
anonymity substantially reduces the complexity of finding
solutions. Additionally, \citet{ham2013notions} generalizes the
player-based notion of symmetry to include further symmetries revealed
by renamings of actions. We conjecture our results extend to this more
general case, at some cost in notational complexity, but we leave this
to future work.

Our basic building block is a \textit{symmetry of a game}:

\begin{restatable}{definition}{defsymmetryofgame}
\label{def:symmetry}
Call a permutation of player indices $\rho: \{ 1, 2, ..., n \} \to \{ 1, 2, ..., n \}$ a \textit{symmetry of a game} $\Gg$ if, for all strategy profiles $(s_1, s_2, ..., s_n)$, permuting the strategy profile permutes the expected payoffs: $EU_{\rho(i)}((s_1, s_2, ..., s_n)) = EU_i((s_{\rho(1)}, s_{\rho(2)}, ..., s_{\rho(n)})),\ \forall i.$

Note that, when we speak of a symmetry of a game, we implicitly assume $A_i = A_j$ for all $i, j$ with $\rho(i) = j$ so that permuting the strategy profile is well-defined.\footnote{\label{ftn:action-mappings} We make this choice to ease notational burden, but we conjecture that our results can be generalized to allow for mappings between actions \citep{ham2013notions}, which we leave for future work.}
\end{restatable}

We characterize the symmetric structure of a game by its set of game symmetries:

\begin{restatable}{definition}{defgamesymmetries}
Denote the set of all symmetries of a game $\Gg$ by:
$\Gamma(\Gg) = \{\text{$\rho: \{ 1, 2, ..., n \} \to \{ 1, 2, ..., n \}$ a symmetry of $\Gg$}\}$.
\end{restatable}

A spectrum of game symmetries is possible. On one end of the spectrum, the identity permutation might be the only symmetry for a given game. On the other end of the spectrum, all possible permutations might be symmetries for a given game. Following the terminology of \citet{von1944theory}, we call the former case \textit{totally unsymmetric} and the latter case \textit{totally symmetric}:

\begin{restatable}{definition}{defspecialsymmetrycases}
\label{def:total_symmetry}
If $\Gamma(\Gg) = S_n$, the full symmetric group, we call the game $\Gamma(\Gg)$ \textit{totally symmetric}. If $\Gamma(\Gg)$ contains only the identity permutation, we call the game \textit{totally unsymmetric}.
\end{restatable}

Let $\mathcal{P} \subseteq \Gamma(\Gg)$ be any subset of the game symmetries. Because $\Gamma(\Gg)$ is closed under composition, we can repeatedly apply permutations in $\mathcal{P}$ to yield a group of game symmetries $\langle \mathcal{P} \rangle$:

\begin{restatable}{definition}{defgeneratedgroup}
Let $\mathcal{P} \subseteq \Gamma(\Gg)$ be a subset of the game symmetries. The group \textit{generated} by $\mathcal{P}$, denoted $\langle \mathcal{P} \rangle$, is the set of all permutations that can result from (possibly repeated) composition of permutations in $\mathcal{P}$: $\langle \mathcal{P} \rangle = \{ \rho_1 \circ \rho_2 \circ \ldots \circ \rho_m\ |\ m \in \mathbb{N}, \rho_1, \rho_2, \ldots, \rho_m \in \mathcal{P} \}$.
\end{restatable}

Group theory tells us that $\langle \mathcal{P} \rangle$ defines a closed binary operation (permutation composition) including an identity and inverse maps, and $\langle \mathcal{P} \rangle$ is the closure of $\mathcal{P}$ under function composition.

With a subset of game symmetries $\mathcal{P} \subseteq \Gamma(\Gg)$ in hand, we can use the permutations in $\langle \mathcal{P} \rangle$ to carry one player index to another. For each player $i$, we give a name to the set of player indices to which permutations in $\langle \mathcal{P} \rangle$ can carry $i$: we call it player $i$'s \textit{orbit}.

\begin{restatable}{definition}{deforbit}
Let $\mathcal{P} \subseteq \Gamma(\Gg)$ be a subset of the game symmetries $\Gamma(\Gg)$.
The \textit{orbit} of player $i$ under $\mathcal{P}$ is the set of all other player indices that $\langle \mathcal{P} \rangle$ can assign to $i$: $\mathcal{P}(i) = \{ \rho(i)\ |\ \rho \in \langle \mathcal{P} \rangle \}$.
\end{restatable}

By standard group theory, the orbits of a group action on a set partition the set's elements, so:

\begin{restatable}{proposition}{prporbitequivs}
\label{prp:orbit_equivs}
Let $\mathcal{P} \subseteq \Gamma(\Gg)$. The orbits of $\mathcal{P}$ partition the game's players.
\end{restatable}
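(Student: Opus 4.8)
The plan is to reduce the claim to the standard fact that the orbits of a group action on a set partition that set, and then verify the group-action hypotheses directly from the definitions given. Concretely, I would define a binary relation $\sim$ on $\{1, \ldots, n\}$ by declaring $i \sim j$ exactly when some $\rho \in \langle \mathcal{P} \rangle$ satisfies $\rho(i) = j$. The orbit $\mathcal{P}(i)$ is then precisely the equivalence class of $i$ under $\sim$, so it suffices to show $\sim$ is an equivalence relation: the equivalence classes of any equivalence relation are nonempty, cover the set, and are pairwise either disjoint or equal, which is exactly what a partition is.

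The three equivalence-relation axioms follow immediately from the group structure of $\langle \mathcal{P} \rangle$, which the excerpt already records (closure under composition, an identity, and inverse maps). For reflexivity, the identity permutation lies in $\langle \mathcal{P} \rangle$ and fixes $i$, so $i \sim i$. For symmetry, if $\rho(i) = j$ with $\rho \in \langle \mathcal{P} \rangle$, then $\rho^{-1} \in \langle \mathcal{P} \rangle$ and $\rho^{-1}(j) = i$, giving $j \sim i$. For transitivity, if $\rho(i) = j$ and $\sigma(j) = k$ with $\rho, \sigma \in \langle \mathcal{P} \rangle$, then $\sigma \circ \rho \in \langle \mathcal{P} \rangle$ and $(\sigma \circ \rho)(i) = k$, so $i \sim k$.

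The only point that warrants care — and the closest thing to an obstacle — is that the definition of $\langle \mathcal{P} \rangle$ as stated lists only forward compositions $\rho_1 \circ \cdots \circ \rho_m$ of elements of $\mathcal{P}$, not their inverses, whereas the symmetry step needs $\langle \mathcal{P} \rangle$ to be closed under inverses. Here I would invoke finiteness: since $\Gamma(\Gg) \subseteq S_n$ is finite, every $\rho \in \langle \mathcal{P} \rangle$ has finite order $k$, so $\rho^{-1} = \rho^{k-1}$ is a finite composition of copies of $\rho$ and hence lies in $\langle \mathcal{P} \rangle$. This is precisely the ``inverse maps'' clause already asserted after the definition of $\langle \mathcal{P} \rangle$, so one may alternatively just cite it. With closure under inverses confirmed, $\sim$ is an equivalence relation, its classes are the orbits $\mathcal{P}(i)$, and the partition claim follows.
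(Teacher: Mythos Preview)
Your proposal is correct and follows essentially the same approach as the paper, which simply invokes the standard group-theoretic fact that orbits of a group action partition the underlying set. You supply more detail than the paper does (including the careful check that $\langle \mathcal{P}\rangle$ is closed under inverses via finiteness of $S_n$), but the underlying argument is the same.
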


\Cref{prp:orbit_equivs} tells us each $\mathcal{P} \subseteq \Gamma(\Gg)$ yields an equivalence relation among the players. To gain intuition for this equivalence relation, consider two extreme cases. In a totally unsymmetric game, $\Gamma(\Gg)$ contains only the identity permutation, in which case each player is in its own orbit of $\Gamma(\Gg)$; the equivalence relation induced by the orbit partition shows that no players are equivalent. In a totally symmetric game, by contrast, every permutation is an element of $\Gamma(\Gg)$, i.e., $\Gamma(\Gg) = S_n$, the full symmetric group; now, all the players share the same orbit of $\Gamma(\Gg)$, and the equivalence relation induced by the orbit partition shows that all the players are equivalent.

We leverage the orbit structure of an arbitrary $\mathcal{P} \subseteq \Gamma(\Gg)$ to define an equivalence relation among players because it adapts to however much or little symmetry is present in the game. Between the extreme cases of no symmetry ($n$ orbits) and total symmetry (1 orbit) mentioned above, there could be any intermediate number of orbits of $\mathcal{P}$. Furthermore, it might not be the case that players who share an orbit can be swapped in arbitrary ways. For an example of this, see Appendix~\ref{app:general_symmetry_example}.

\subsection{Symmetry in Strategy Profiles}

Having formalized a symmetry of a game in the preceding section, we follow \citet{nash1951non} and define symmetry in strategy profiles with respect to symmetry in game structure:

\begin{restatable}{definition}{defpinvariant}
Let $\mathcal{P} \subseteq \Gamma(\Gg)$ be a subset of the game symmetries $\Gamma(\Gg)$.
We call a strategy profile $s = (s_1, s_2, ..., s_n)$ $\mathcal{P}$\textit{-invariant} if $(s_1, s_2, ..., s_n) = (s_{\rho(1)}, s_{\rho(2)}, ..., s_{\rho(n)})$ for all $\rho \in \langle \mathcal{P} \rangle$.
\end{restatable}

The equivalence relation among players induced by the orbit structure of $\mathcal{P}$ is fundamental to our definition of symmetry in strategy profiles by the following proposition:

\begin{restatable}{proposition}{prpsyminveq}
\label{prp:sym_inv_eq}
A strategy profile $s = (s_1, s_2, ..., s_n)$ is $\mathcal{P}$-invariant if and only if $s_i = s_j$ for each pair of players $i$ and $j$ with $\mathcal{P}(i) = \mathcal{P}(j)$.
\end{restatable}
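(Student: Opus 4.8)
The plan is to unwind both definitions and prove each implication directly, the key observation being that the orbit condition ``$\mathcal{P}(i) = \mathcal{P}(j)$'' is equivalent to the existence of a single permutation $\rho \in \langle \mathcal{P}\rangle$ carrying $i$ to $j$. First I would record the elementary consequences of the group structure of $\langle\mathcal{P}\rangle$ noted just before the statement: the identity lies in $\langle\mathcal{P}\rangle$, so $i \in \mathcal{P}(i)$ for every $i$, and by \Cref{prp:orbit_equivs} the orbits partition the players. Together these yield the bridge I need: $j \in \mathcal{P}(i)$ if and only if $\mathcal{P}(i) = \mathcal{P}(j)$. Indeed, if $\rho(i)=j$ for some $\rho \in \langle\mathcal{P}\rangle$, then $j$ lies both in the orbit of $i$ and (via the identity) in its own orbit, so these orbits coincide because distinct orbits are disjoint; conversely, $\mathcal{P}(i)=\mathcal{P}(j)$ forces $j \in \mathcal{P}(j) = \mathcal{P}(i)$.

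For the forward direction I would assume $s$ is $\mathcal{P}$-invariant, which read coordinate-wise means $s_i = s_{\rho(i)}$ for every player $i$ and every $\rho \in \langle\mathcal{P}\rangle$. Given $i,j$ with $\mathcal{P}(i)=\mathcal{P}(j)$, the bridge supplies some $\rho \in \langle\mathcal{P}\rangle$ with $\rho(i)=j$, whence invariance gives $s_i = s_{\rho(i)} = s_j$ at once.

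For the reverse direction I would assume $s_i = s_j$ whenever $i$ and $j$ share an orbit. Fixing any $\rho\in\langle\mathcal{P}\rangle$ and any player $i$, we have $\rho(i)\in\mathcal{P}(i)$, so $i$ and $\rho(i)$ lie in a common orbit and the hypothesis yields $s_i = s_{\rho(i)}$. Since $i$ was arbitrary this is exactly the identity $(s_1,\dots,s_n)=(s_{\rho(1)},\dots,s_{\rho(n)})$, and since $\rho$ was arbitrary, $s$ is $\mathcal{P}$-invariant.

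The only content beyond definition-chasing is the bridge lemma relating orbit equality to single-step reachability. The mild subtlety there is that $\mathcal{P}$-invariance quantifies over \emph{all} of $\langle\mathcal{P}\rangle$ whereas the orbit condition only asserts the existence of \emph{some} connecting permutation; reconciling these is precisely what the partition property of \Cref{prp:orbit_equivs} (together with reflexivity from the identity element) provides, so I expect no genuine obstacle.
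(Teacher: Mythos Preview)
Your proof is correct. The paper does not actually supply a proof of this proposition; it is stated in the main text and treated as an immediate consequence of the definitions together with \Cref{prp:orbit_equivs}, so your definition-chasing argument (via the bridge that $\mathcal{P}(i)=\mathcal{P}(j)$ iff some $\rho\in\langle\mathcal{P}\rangle$ sends $i$ to $j$) is exactly the intended elementary verification.
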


To state \Cref{prp:sym_inv_eq} another way, a strategy profile is $\mathcal{P}$-invariant if all pairs of players $i$ and $j$ that are equivalent under the orbits of $\mathcal{P}$ play the same strategy.

\begin{figure*}[t]
    \centering
    \begin{subfigure}[t]{0.33\textwidth}
        \centering
        \includegraphics[width=1.0\linewidth]{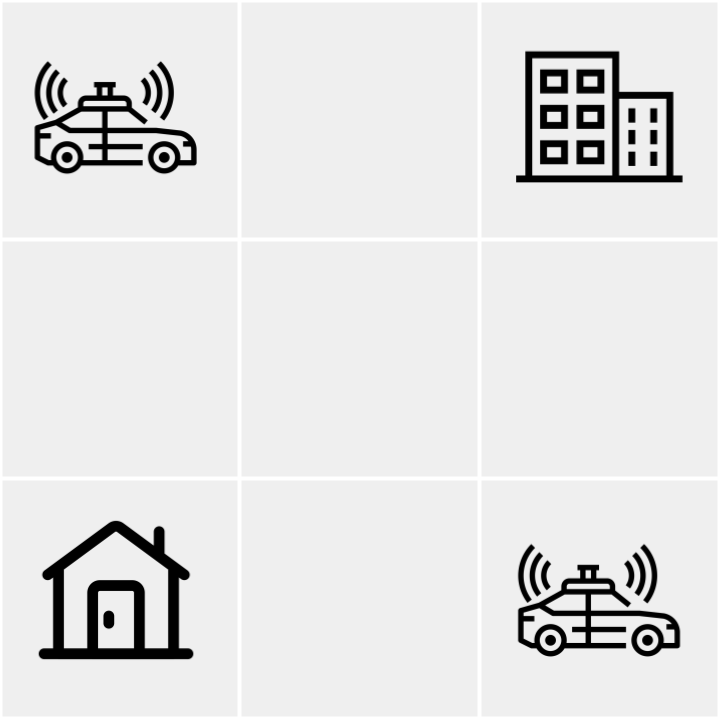}
        \caption{Symmetric agents \& environment}
        \label{fig:symmetric_mdp_a}
    \end{subfigure}\hfill
    \begin{subfigure}[t]{0.33\textwidth}
        \centering
        \includegraphics[width=1.0\linewidth]{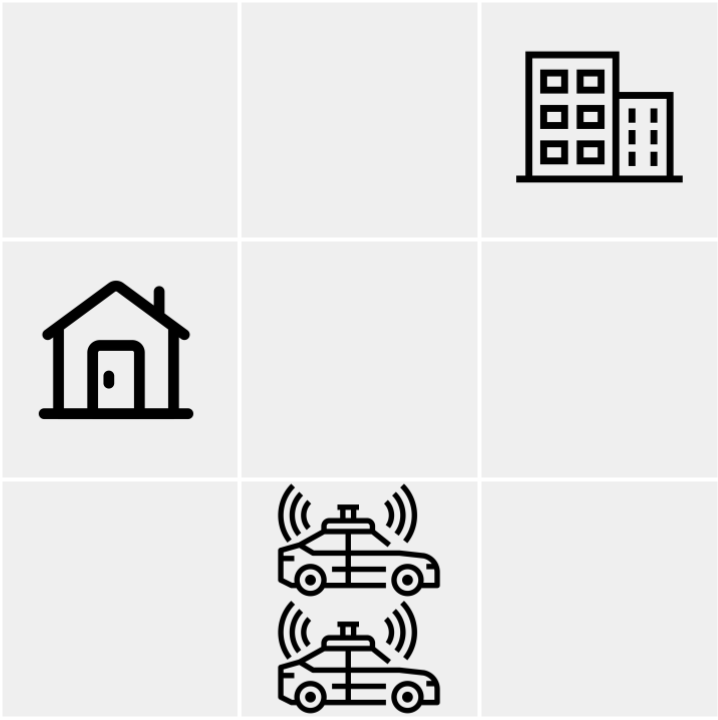}
        \caption{Same initial conditions}
        \label{fig:symmetric_mdp_b}
    \end{subfigure}\hfill
    \begin{subfigure}[t]{0.33\textwidth}
        \centering
        \includegraphics[width=1.0\linewidth]{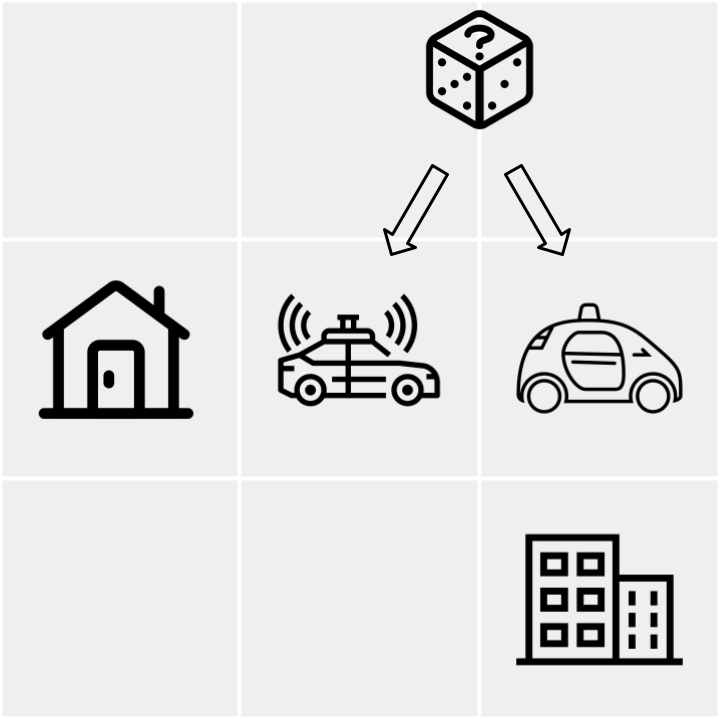}
        \caption{Veil of ignorance}
        \label{fig:symmetric_mdp_c}
    \end{subfigure}
\caption{Various self-driving taxi grid-world games that satisfy our symmetry requirement. (a) Symmetric agents in a symmetric environment. (b) Although the environment is asymmetric, the game is still symmetric because the agents have the same initial condition. (c) When agents must be programmed \textit{before} knowing their initial conditions (e.g. location, type), symmetry holds behind the \textit{veil of ignorance} (\secref{sec:veil_of_ignorance}) even with nonidentical agents and asymmetric environments.}
\label{fig:symmetry_illustration}
\end{figure*}

\subsection{Symmetry via the Veil of Ignorance} \label{sec:veil_of_ignorance}

Sometimes strategies must be specified for all players before knowing the players' roles and initial conditions. Consider writing laws or programming household robots; all players are treated equally in specifying situation-dependent contingency plans. When all players have equal likelihood of ending up in any given situation (e.g., when all players have the same initial state distribution), the game of choosing contingency plans \textit{a priori} is totally symmetric. (\appref{app:veil_of_ignorance} gives an example.) For its analog in the philosophy of \citet{rawls1971theory} and \citet{harsanyi1975can}, we call this situation the \textit{veil of  ignorance}.

\subsection{What do Symmetric Games Look Like?} \label{sec:symmetry_illustration}

To illustrate types of symmetry in games, Figure~\ref{fig:symmetry_illustration} presents symmetric variants of a self-driving taxi grid-world game inspired by the motivating example of \secref{sec:laundry_washing_up}. The taxis can move to adjacent grid cells, and they are on a team to drive people around a town with a residential area and a business area.

An idealized symmetric environment is shown in Figure~\ref{fig:symmetric_mdp_a}. Here, the self-driving taxis are identical, and the environment is perfectly symmetric; the symmetry of the game is clear. This is the sort of symmetry that might be found in highly controlled environments such as factories.

Identical agents in an asymmetric environment are shown in Figure~\ref{fig:symmetric_mdp_b}. Because the self-driving taxis are identical and have the same initial condition, their action sequences can be swapped without changing the outcome of the game. Thus, \textit{the game is symmetric even though the environment is asymmetric}. While it is impossible for real-world agents to have the exact same physical location, it suffices for them to have the same \textit{distribution} over initial conditions. Furthermore, we expect that \textit{virtual} agents (such as customer service chatbots or nodes in a compute cluster) may have identical initial conditions.

Nonidentical agents in an asymmetric environment are shown in Figure~\ref{fig:symmetric_mdp_c}. If we assume that the type and / or the initial location of each self-driving taxi is equally random, then the game of choosing contingency plans behind the veil of ignorance (\secref{sec:veil_of_ignorance}) is totally symmetric. We expect this case of symmetry to be common when AI uses the same source code or the same learned parameters. In fact, weight sharing is a common practice in multi-agent RL \citep{foerster2016learning}.

\section{Local Symmetric Optima are (Global) Nash Equilibria} \label{sec:theoretical}

After the formal definitions of symmetry in the previous section, we are almost ready to formally state the first of our main results. The only remaining definition is that of a local symmetric optimum:

\begin{restatable}{definition}{deflocallyoptimalstrategy}
Call $s$ a \textit{locally optimal $\mathcal{P}$-invariant} strategy profile of a common-payoff game if: (i) $s$ is $\mathcal{P}$-invariant, and (ii) for some $\epsilon > 0$, no $\mathcal{P}$-invariant strategy $s'$ with $EU(s') > EU(s)$ can be formed by adding or subtracting at most $\epsilon$ to the probability of taking any given action $a_i \in A_i$. If, furthermore, condition (ii) holds \textit{for all} $\epsilon > 0$, we call $s$ a \textit{globally optimal} $\mathcal{P}$\textit{-invariant} strategy profile or simply an \textit{optimal} $\mathcal{P}$\textit{-invariant} strategy profile.
\end{restatable}

Now we can state our first main theorem, that local symmetric optima are (global) Nash equilibria:

\begin{restatable}{theorem}{symlocaloptisnash}
\label{thm:efg_equilibrium}
Let $\Gg$ be a common-payoff normal-form game, and let $\mathcal{P} \subseteq \Gamma(\Gg)$ be a subset of the game symmetries $\Gamma(\Gg)$. Any locally optimal $\mathcal{P}$-invariant strategy profile is a Nash equilibrium.
\end{restatable}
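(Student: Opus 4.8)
The plan is to argue by contradiction. Suppose $s$ is a locally optimal $\mathcal{P}$-invariant profile (with tolerance $\epsilon$) that is \emph{not} a Nash equilibrium, so some player $k$ has a strictly profitable unilateral deviation $s_k'$; set $\delta := EU(s_k', s_{-k}) - EU(s) > 0$. Since the game is common-payoff, $EU$ is the single multilinear function $EU(\sigma) = \sum_{a} u(a)\prod_i \sigma_i(a_i)$, which is \emph{linear} in any one player's strategy. Writing $e_a$ for the pure strategy on action $a$, $d := s_k' - s_k$, and $Q_k(a) := EU(e_a, s_{-k})$ (player $k$'s payoff from playing $a$ against $s_{-k}$), linearity along the segment from $s_k$ to $s_k'$ gives $\sum_{a} d(a)\,Q_k(a) = \delta > 0$. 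The goal is to convert this single-player incentive into a \emph{symmetric} incentive that contradicts local optimality.

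The key move is to deviate every player in $k$'s orbit simultaneously. Let $O := \mathcal{P}(k)$ and define, for $t \in [0,1]$, the profile $s(t)$ by $s_j(t) = (1-t)\,s_k + t\,s_k'$ for $j \in O$ and $s_j(t) = s_j$ otherwise. By \Cref{prp:sym_inv_eq} all players of $O$ share the common strategy $s_k$ at the $\mathcal{P}$-invariant point $s$, so moving them identically (and leaving the other orbits untouched) keeps $s(t)$ $\mathcal{P}$-invariant for every $t$; for small $t$ it differs from $s$ by at most $\epsilon$ in each action probability. Hence it suffices to show that $f(t) := EU(s(t))$ satisfies $f'(0) > 0$, for then $s(t)$ with small $t > 0$ is an admissible $\mathcal{P}$-invariant perturbation with strictly higher payoff, contradicting condition (ii) of local optimality.

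Differentiating the multilinear $EU$ along the path (product rule, with only the $O$-coordinates moving) yields $f'(0) = \sum_{j \in O} \sum_{a} d(a)\,Q_j(a)$, where $Q_j(a) := EU(e_a, s_{-j})$. The crux---and the step I expect to be the main obstacle---is a symmetry lemma stating that at a $\mathcal{P}$-invariant profile $Q_j \equiv Q_k$ for every $j \in O$. In the common-payoff case \Cref{def:symmetry} reduces to invariance of $EU$ under permuting a profile's entries by any $\rho \in \langle \mathcal{P}\rangle \subseteq \Gamma(\Gg)$. To prove the lemma, pick $\rho \in \langle \mathcal{P}\rangle$ with $\rho(j) = k$ (such $\rho$ exists since $j,k$ share an orbit and $\langle\mathcal{P}\rangle$ is a group). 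Permuting the entries of $(e_a, s_{-k})$ by $\rho$ leaves $EU$ unchanged; the entry $e_a$ lands in coordinate $j$, while every other coordinate $i$ carries $s_{\rho(i)}$, which equals $s_i$ by $\mathcal{P}$-invariance of $s$. Thus $(e_a, s_{-k})$ and $(e_a, s_{-j})$ have equal payoff, i.e. $Q_k(a) = Q_j(a)$. Getting the direction of the permutation right (using $\rho^{-1}$, and invoking $\mathcal{P}$-invariance of $s$ to erase the residual relabeling of the unchanged coordinates) is the delicate bookkeeping.

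With the lemma in hand, every summand equals $\sum_a d(a)\,Q_k(a) = \delta$, so $f'(0) = |O|\,\delta > 0$. Therefore $s(t)$ is a $\mathcal{P}$-invariant profile arbitrarily close to $s$ with strictly larger common payoff, contradicting local optimality and completing the proof. I note that the argument uses no concavity or global structure---only multilinearity of $EU$ together with the orbit symmetry---so I would expect the same path construction to carry over, with quantitative estimates on $f'(0)$, to the robustness statements of \Cref{sec:robustness}.
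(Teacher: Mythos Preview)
Your proposal is correct and follows the same overall strategy as the paper: assume a profitable unilateral deviation exists, spread it symmetrically across the deviator's orbit, invoke the symmetry lemma $Q_j \equiv Q_k$ (which the paper states as ``for any player $j \in \mathcal{P}(1)$, there exists a symmetry $\rho \in \Gamma(\Gg)$ with $\rho(j)=1$''), and conclude that the first-order effect of the symmetric perturbation is strictly positive, contradicting local optimality.

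The presentation differs in one respect worth noting. The paper deviates toward a single pure action $a_1$ and decomposes $EU(s')$ \emph{probabilistically}, conditioning on the binomial count $m$ of orbit members who ``actually switch'' to $a_1$; it then argues $B(m{=}1,p)/B(m{>}1,p) \to \infty$ as $p \to 0$, so the beneficial single-switch term dominates any possibly harmful multi-switch terms. You instead compute the directional derivative $f'(0)$ of the multilinear $EU$ along the symmetric segment and read off $f'(0) = |O|\,\delta$ directly. These are the same computation---your $f'(0)$ is exactly the coefficient of the paper's $B(m{=}1,p)$ term, and the higher-order binomial terms are your $O(t^2)$ remainder---but your calculus framing is cleaner: it avoids the case analysis, handles a mixed deviation $s_k'$ without first reducing to a pure action, and makes the role of multilinearity explicit. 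The paper's framing, on the other hand, makes the ``one player deviates vs.\ several deviate'' intuition more vivid and transfers more transparently to the instability analysis in \Cref{thm:nondegeneracy_combined}, where the same binomial decomposition reappears.
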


\begin{proof}
We provide a sketch here and full details in \appref{app:main_proof}.
Suppose, for the sake of contradiction, that an individual player $i$ could beneficially deviate to action $a_i$ (if a beneficial deviation exists, then there is one to a pure strategy). Then, consider instead a collective change to a symmetric strategy profile in which all the players in $i$'s orbit shift slightly more probability to $a_i$. By making the amount of probability shifted ever smaller, the  probability that this change affects exactly one agent's realized action (making it $a_i$ when it would not have been before) can be arbitrarily larger than the probability that it affects multiple agents' realized actions. Moreover, if this causes exactly one agent's realized action to change, this must be in expectation beneficial, since the original unilateral deviation was in expectation beneficial. Hence, the original strategy profile cannot have been locally optimal. 
\end{proof}

\subsection{Applications of the Theorem}

First, we provide an example of applying \Cref{thm:efg_equilibrium} to multi-agent RL.

\begin{restatable}{example}{exmultiagentrl}
\label{ex:marl_iterative_best_response}
Consider a cooperative multi-agent RL environment where all agents have the same initial state distribution. Suppose, as is typical practice \citep{foerster2016learning}, that we use a gradient method to train the parameters of a policy that all agents will share. Assume that the gradient method reaches a symmetric local optimum in mixed strategy space. If we wanted to improve upon this symmetric local optimum, we might lift the symmetry requirement and perform iterative best response, i.e., continue learning by updating the parameters of just one agent. However, by \Cref{thm:efg_equilibrium}, the symmetric local optimum is a Nash equilibrium. Thus, updating the parameters of a single agent cannot improve the common payoff; updating the parameters of at least two agents is necessary.
\end{restatable}

The preceding example assumes that a gradient method in multi-agent RL reaches a symmetric local optimum in mixed strategy space. In practice, agents may employ behavioral strategies, and it may not be possible to verify how close a symmetric strategy profile is to a local optimum.

In Appendix~\ref{app:general_symmetry_example}, we give another example that shows how \Cref{thm:efg_equilibrium} is more general than the case of total symmetry. The example illustrates the existence of rotational symmetry without total symmetry, and it illustrates how picking different $\mathcal{P} \subseteq \Gamma(\Gg)$ leads to different optimal $\mathcal{P}$-invariant strategies and thus different $\mathcal{P}$-invariant Nash equilibria by \Cref{thm:efg_equilibrium}.

\subsection{Robustness to Payoff and Strategy Perturbations} \label{sec:robustness}

\Cref{thm:efg_equilibrium} assumes that all players' payoffs are exactly the same, and it applies to strategy profiles that are exact local optima. If we relax these assumptions, the theorem still holds approximately. If all players' payoffs are equal $\pm \epsilon$, or if a strategy profile is $\epsilon$ distance away from a symmetric local optimum, then a robust version of \Cref{thm:efg_equilibrium} guarantees a $k\epsilon$-Nash equilibrium for some game-dependent constant $k$. See Appendix \ref{app:robustness} for a precise treatment of these robustness results.

While the results of this section concern Nash equilibria, we note that Nash equilibria, by definition, consider the possibility of only a \textit{single} agent deviating. In the next section, we investigate when multiple agents might have an incentive to \textit{simultaneously} deviate by studying the optimality of symmetric strategy profiles in possibly-asymmetric strategy space.

\subsection{Extending the Theorem to Multiple Teams}

So far, we have considered the cooperation of a single team. We can also study the interaction of many different teams. Suppose each team shares a common payoff while the interaction between the different teams is general sum. For example, we could extend the self-driving taxi game of \secref{sec:laundry_washing_up} to have multiple self-driving taxi companies. In prior work, the special case of zero-sum interaction between one team and a single adversary is called an \textit{adversarial team game} \citep{von1997team,celli2018computational,carminati2022public}.

\Cref{thm:efg_equilibrium} directly translates to this setting with multiple teams. To see why, consider a metagame with one player for each team. In the metagame, each of the metaplayers controls the strategy profile of their team. Suppose we are at a (local) Nash equilibrium in this metagame and that each team is playing a strategy profile that is $\mathcal{P}$-invariant in the original game. Now consider just one team trying to update to improve its payoff. If we leave the strategies of the other teams fixed, then this becomes a single-team, common-payoff game. So \Cref{thm:efg_equilibrium} applies, and no individual player can deviate to improve their payoff. By repeating this argument for every team, we see that no individual player on any team can deviate to improve their payoff. Therefore, the individual players of the original game are in a (global) Nash equilibrium.

\section{When are Local Optima in Symmetric Strategy Space also Local Optima in Possibly-asymmetric Strategy Space?} \label{sec:asymmetric_stability}

Our preceding theory applies to locally optimal $\mathcal{P}$-invariant, i.e., symmetric, strategy profiles. This leaves open the question of how well locally optimal symmetric strategy profiles perform when considered in the broader, possibly-asymmetric strategy space. When are locally optimal $\mathcal{P}$-invariant strategy profiles also locally optimal in possibly-asymmetric strategy space? This question is important in machine learning (ML) applications where users of symmetrically optimal ML systems might be motivated to make modifications to the systems, even for purposes of a common payoff.

To address this precisely, we formally define a \textit{local optimum in possibly-asymmetric strategy space}:
\begin{restatable}{definition}{defasymmetricoptimum}
\label{def:asymmetric_optimum}
A strategy profile $s = (s_1, s_2, \ldots, s_n)$ of a common-payoff normal-form game is \textit{locally optimal among possibly-asymmetric strategy profiles}, or, equivalently, a \textit{local optimum in possibly-asymmetric strategy space}, if for some $\epsilon > 0$, no strategy profile $s'$ with $EU(s') > EU(s)$ can be formed by changing $s$
in such a way that the probability of taking any given action $a_i \in A_i$ for any player $i$ changes by at most $\epsilon$.
\end{restatable}

\begin{restatable}{remark}{reminstability}
\label{rem:instability}
\Cref{def:asymmetric_optimum} relates to notions of \textit{stability} under dynamics, such as those with perturbations or stochasticity, that allow multiple players to make asymmetric deviations. In particular, if $s$ is not a local maximum in asymmetric strategy space, this means that there is some set of players $C$ and strategy $s'_C$ arbitrarily close to $s$, such that if players $C$ were to play $s'_C$ (by mistake or due to stochasticity), some Player $i\in N-C$ would develop a strict preference over the support of $s_i$. To illustrate this, we return to the motivating example of self-driving taxis.
\end{restatable}

\begin{restatable}{example}{exwashingupinstability}
\label{ex:washing_up_instability}
Consider again the game of \tabref{laundry-washing-symmetric-unstable}. As Figure \ref{fig:anticoordination_landscape} illustrates, the symmetric optimum is for both Auto and Mobile to randomize uniformly between H and W. While this is a Nash equilibrium, it is not a local optimum in possibly-asymmetric strategy space. If one player deviates from uniformly randomizing, the other player develops a strict preference for either $H$ or $W$.
\end{restatable}

To generalize the phenomenon of \Cref{ex:washing_up_instability}, we use the following \textit{degeneracy}\footnote{We note that ``degnerate'' is already an established term in the game-theoretic literature where it is often applied only to two-player games 
\citep[see, e.g,][Definition 3.2]{vonStengel2007}. While similar to the established notion of degeneracy, our definition of degeneracy is stronger, which makes our statements about non-degenerate games more general. (See \appref{app:nondegeneracy_combined} for details.)} condition:
\begin{restatable}{definition}{defnondegeneracy}
\label{def:nondegeneracy}
Let $s$ be a Nash equilibrium of a game $\Gg$: (i) If $s$ is deterministic, i.e., if every $s_i$ is a Dirac delta function on some $a_i$, then $s$ is \textit{degenerate} if at least two players $i$ are indifferent between $a_i$ and some other $a_i' \in A_i - \{a_i\}$. (ii) Otherwise, if $s$ is mixed, then $s$ is \textit{degenerate} if for all players $i$ and all $a_{-i}\subseteq \mathrm{supp} (s_{-i})$, the term $ EU_i(a_i,a_{-i})$ is constant across $a_i\in \mathrm{supp}(s_i)$.

We call a game $\Gg$ \textit{degenerate} if it has at least one degenerate Nash equilibrium.
\end{restatable}
Intuitively, our definition says that a deterministic Nash equilibrium is non-degenerate when it is strict or almost strict (excepting of at most one player who may be indifferent over available actions). A mixed Nash equilibrium, on the other hand, is non-degenerate when \textit{mixing matters}.

In non-degenerate games, our next theorem shows that a local symmetric optimum is a local optimum in possibly-asymmetric strategy space if and only if it is deterministic. Formally:
\begin{restatable}{theorem}{thmnondegeneracycombined}
\label{thm:nondegeneracy_combined}
Let $\Gg$ be a non-degenerate common-payoff normal-form game, and let $\mathcal{P} \subseteq \Gamma(\Gg)$ be a subset of the game symmetries $\Gamma(\Gg)$.
A locally optimal $\mathcal{P}$-invariant strategy profile is locally optimal among possibly-asymmetric strategy profiles if and only if it is deterministic.
\end{restatable}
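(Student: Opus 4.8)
The plan is to use Theorem~\ref{thm:efg_equilibrium} to reduce the statement to a claim about non-degenerate Nash equilibria: since a locally optimal $\mathcal{P}$-invariant profile $s$ is a Nash equilibrium and the game is non-degenerate, $s$ is a non-degenerate Nash equilibrium, so it remains to show that such an $s$ is a local optimum in possibly-asymmetric strategy space if and only if it is deterministic. The workhorse is the multilinear expansion of expected utility. For a perturbation $d=(d_1,\dots,d_n)$ in which each $d_j$ moves probability mass only among actions in $\mathrm{supp}(s_j)$ (a ``within-support'' direction, feasible in both the $+d_j$ and $-d_j$ senses), multilinearity gives $EU(s+d)-EU(s)=\sum_{\emptyset\neq S\subseteq N}\Phi_S(d_S)$, where $\Phi_S(d_S)=\sum_{a_S}\big(\prod_{j\in S}d_j(a_j)\big)\,\mathbb{E}_{a_{-S}\sim s}[u(a_S,a_{-S})]$ collects the terms in which exactly the players of $S$ deviate. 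Each singleton term $\Phi_{\{j\}}(d_j)$ vanishes, because $EU_j(\cdot,s_{-j})$ is constant on $\mathrm{supp}(s_j)$ at a Nash equilibrium while $\sum_a d_j(a)=0$; so only the terms with $|S|\ge 2$ survive.

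For the direction ``deterministic $\Rightarrow$ asymmetric local optimum,'' I would use that a non-degenerate deterministic Nash equilibrium has at most one indifferent player (Definition~\ref{def:nondegeneracy}(i)); every other player $k$ strictly prefers its action by some gap $\gamma_k>0$. Writing the mass each player $k$ moves off its action as $p_k$, the surviving first-order term $\Phi_{\{k\}}$ is at most $-\gamma_k p_k$ for strict players and at most $0$ for the lone indifferent player, while each higher-order term $\Phi_S$ ($|S|\ge 2$) is bounded in magnitude by a constant times $\prod_{k\in S}p_k$. The key point is that every $S$ with $|S|\ge 2$ contains a strict player, so each higher-order term is at most $O(\epsilon)$ times some $p_k$ of a strict player; summing, the total change is at most $\big(\sum_{\text{strict }k}p_k\big)(-\gamma+C\epsilon)\le 0$ for $\epsilon$ small. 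Hence no nearby profile beats $s$.

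For the converse I argue the contrapositive: a mixed $s$ is not an asymmetric local optimum. This splits into two sub-claims. First, if $s$ were a local optimum then every interaction term vanishes, i.e.\ $\Phi_S\equiv 0$ for all $|S|\ge 2$ on within-support directions. To see this, suppose some $\Phi_S\not\equiv 0$ and pick $S_0$ of minimal cardinality among such $S$; take a within-support direction $e$ supported on $S_0$ with $\Phi_{S_0}(e_{S_0})\neq 0$ and scale it as $d=te$. All terms $\Phi_{S'}(te_{S'})$ with $S'\subsetneq S_0$ vanish ($|S'|\le 1$ by the Nash property, $2\le|S'|<|S_0|$ by minimality), so $EU(s+te)-EU(s)=t^{|S_0|}\Phi_{S_0}(e_{S_0})$ exactly; since $\Phi_{S_0}$ is multilinear it flips sign when one player's component $e_j$ is negated, and the monomial $t^{|S_0|}$ flips sign when $t$ does, so some arbitrarily small feasible choice makes this strictly positive---contradicting local optimality. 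Second, non-degeneracy of the mixed equilibrium (Definition~\ref{def:nondegeneracy}(ii)) supplies a player $i$, actions $a_i,a_i'\in\mathrm{supp}(s_i)$, and a pure profile $\hat a_{-i}\in\mathrm{supp}(s_{-i})$ with $g(\hat a_{-i}):=u(a_i,\hat a_{-i})-u(a_i',\hat a_{-i})\neq 0$, while Nash indifference gives $\mathbb{E}_{s_{-i}}[g]=0$. Contracting player $i$'s direction $\delta_{a_i}-\delta_{a_i'}$ turns each $\Phi_{\{i\}\cup T}$ into the top-order multilinear form of the conditional marginal $\bar g_T(a_T)=\mathbb{E}_{a_{-(\{i\}\cup T)}\sim s}[g(a_T,\cdot)]$. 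Thus if all $\Phi_S\equiv 0$ then all these interaction forms of $g$ vanish; together with the vanishing mean this would force $g\equiv 0$ on the support box, contradicting $g(\hat a_{-i})\neq 0$. So some $\Phi_S\not\equiv 0$, and the first sub-claim then yields an improving deviation.

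I expect the main obstacle to be this last step---showing that non-degeneracy, which only asserts a single pure-profile inequality $g(\hat a_{-i})\neq 0$, forces a genuine multi-player interaction term to be nonzero. The subtlety is that $g$ can be a ``pure interaction'' whose one- and two-coordinate marginals all vanish (e.g.\ a parity-type dependence), so breaking player $i$'s indifference may require jointly perturbing an a priori unknown and possibly large set of other players; one cannot in general stop at pairwise cross-derivatives. Making the reduction rigorous amounts to an ANOVA/Hoeffding-decomposition completeness fact: a function on the product support with zero mean and vanishing top-order interaction form on every subset must be identically zero. The remaining bookkeeping---verifying the first-order terms vanish, bounding the higher-order terms in the deterministic case, and checking feasibility of the scaled directions near the simplex boundary---is routine.
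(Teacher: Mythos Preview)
Your proposal is correct. The deterministic direction is essentially the paper's argument verbatim: both expand $EU(s')$ multilinearly, observe that the linear terms are nonpositive with a strict gap for all but at most one player (by Definition~\ref{def:nondegeneracy}(i)), and note that every higher-order set $S$ with $|S|\ge 2$ contains a strict player, so its contribution is dominated by the corresponding linear term as the perturbation shrinks.

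For the mixed direction you take a genuinely different route. The paper argues constructively: it lets $m$ be the largest integer for which $EU(a_C,s_{-C})=EU(s)$ for every $|C|\le m$ and every $a_C\in\mathrm{supp}(s_C)$; non-degeneracy forces $m<n$, and at level $m{+}1$ one finds $C$, $j\notin C$, and support actions with $EU(a_j^{*},a_C,s_{-j-C})>EU(s)$. Mixing probability $\epsilon$ onto these actions gives an improving deviation because all lower-order terms equal $EU(s)$ by the choice of $m$. Your contrapositive approach (local optimality $\Rightarrow$ all $\Phi_S\equiv 0$ on within-support directions via a minimal-$S_0$ sign-flip, then a Hoeffding/ANOVA completeness argument to contradict non-degeneracy) is sound, and in fact your minimal $|S_0|$ is exactly the paper's $m{+}1$, as one sees from the inclusion--exclusion identity
\[
EU(a_C,s_{-C})-EU(s)=\sum_{\emptyset\neq S\subseteq C}\Phi_S\bigl((\delta_{a_j}-s_j)_{j\in S}\bigr).
\]
This identity also shows that what you flagged as the ``main obstacle'' can be bypassed entirely: non-degeneracy makes $u$ non-constant on the support box, so the above identity with $C=N$ (together with the vanishing of the singleton terms by Nash indifference on support) already exhibits a nonzero $\Phi_S$ with $|S|\ge 2$ at the concrete direction $d_j=\delta_{a_j}-s_j$, without any ANOVA machinery. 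So both proofs reduce to ``find the lowest-order nonzero interaction and perturb there''; the paper's packaging is shorter and explicitly constructive, while yours is more abstract and a bit heavier than necessary, though perfectly valid.
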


\begin{wraptable}{r}{0.27\textwidth}
\vspace{-.1in}
\begin{tabular}{c|c|c|c|}
    \multicolumn{1}{c}{} & \multicolumn{1}{c}{a} & \multicolumn{1}{c}{b} & \multicolumn{1}{c}{c} \\\cline{2-4}
    a & 1 & 1 & 1 \\\cline{2-4}
    b & 1 & -10 & $1+\epsilon$ \\\cline{2-4}
    c & 1 & $1+\epsilon$ & -10\\\cline{2-4}
\end{tabular}
\vspace{-.1in}
\end{wraptable}
To see why the non-degeneracy condition is needed in \Cref{thm:nondegeneracy_combined}, we provide an example of a degenerate game:

\begin{example}
Consider the 3x3 symmetric game shown above.
Here, $(a,a)$ is the unique global optimum in symmetric strategy space. By \Cref{thm:efg_equilibrium}, it is therefore also a Nash equilibrium. However, it is a degenerate Nash equilibrium and not locally optimal in asymmetric strategic space. The payoff can be improved by, e.g., the row player shifting small probability to $b$, and the column player shifting small probability to $c$.
\end{example}

We have already seen an example of a non-degenerate deterministic equilibrium. The symmetric optimum from \secref{sec:complex_coordination_example}, even though it is not the \textit{global} asymmetric optimum, is nevertheless \textit{locally} optimal in possibly-asymmetric strategy space by \Cref{thm:nondegeneracy_combined}.

\section{Learning Symmetric Strategies in GAMUT} \label{sec:gamut_replicator}

\Cref{thm:nondegeneracy_combined} shows that, in non-degenerate games, a locally optimal symmetric strategy profile is stable in the sense of \Cref{rem:instability} if and only if it is pure. For those concerned about stability, this raises the question: how often are optimal strategies pure, and how often are they mixed?

To answer this question, we present an empirical analysis of learning symmetric strategy profiles in the GAMUT suite of game generators \citep{nudelman2004run}. We are interested both in how centralized optimization algorithms (such as gradient methods) search for symmetric strategies and in how decentralized populations of agents evolve symmetric strategies. To study the former, we run Sequential Least SQuares Programming (SLSQP) \citep{kraft1988software,2020SciPy-NMeth}, a local search method for constrained optimization. To study the latter, we simulate the replicator dynamics \citep{fudenberg1998theory}, an update rule from evolutionary game theory with connections to reinforcement learning \citep{borgers1997learning, tuyls2003extended, tuyls2003selection}. (See \appref{app:replicator_dynamics} for details.)

\subsection{Experimental Setup} \label{sec:gamut_experimental_setup}

We ran experiments in all three classes of symmetric GAMUT games: RandomGame, CoordinationGame, and CollaborationGame. (While other classes of GAMUT games, such as the prisoner's dilemma, exist, they cannot be turned into a symmetric, common-payoff game without losing their essential structure.) Intuitively, a RandomGame draws all payoffs uniformly at random, whereas in a CoordinationGame and a CollaborationGame, the highest payoffs are always for outcomes where all players choose the same action. (See \appref{app:gamut_definitions} for details.) Because CoordinationGame and CollaborationGame have such similar game structures, our experimental results in the two games are nearly identical. To avoid redundancy, we only include experimental results for CoordinationGame.

For each game class, we sweep the parameters of the game from 2 to 5 players and 2 to 5 actions, i.e., with $(|N|, |A_i|) \in \{2, 3, 4, 5\} \times \{2, 3, 4, 5\}$. We sample 100 games at each parameter setting and then attempt to calculate the global symmetric optimum using (i) 10 runs of SLSQP and (ii) 10 runs of the replicator dynamic (each with a different initialization drawn uniformly at random over the simplex), resulting in 10 + 10 = 20 solution attempts per game. Because we do not have ground truth for the globally optimal solution of the game (which is NP-hard to compute), we instead use the best of our 20 solution attempts, which we call the ``best solution.''

\subsection{How Often are Symmetric Optima Local Optima among Possibly-asymmetric Strategies?} \label{sec:gamut_instability}

Here, we try to get a sense for how often symmetric optima are stable in the sense that they are also local optima in possibly-asymmetric strategy space (see \Cref{rem:instability}). In Appendix \tabref{tab:coordinationgame_max_mixed}, we show in what fraction of games the best solution of our 20 optimization attempts is mixed; by \Cref{thm:nondegeneracy_combined} and \Cref{prp:gamut_measure_zero_degen} from the Appendix, this is the fraction of games whose symmetric optima are not local optima in possibly-asymmetric strategy space. In CoordinationGames, the symmetric optimum is always (by construction) for all players to choose the same action, leading to stability. By contrast, we see that 36\% to 60\% of RandomGames are \textit{unstable}. We conclude that if real-world games do not have the special structure of CoordinationGames, then instability may be common.

\subsection{How Often do SLSQP and the Replicator Dynamic Find an Optimal Solution?} \label{sec:gamut_optimality}

As sequential least squares programming and the replicator dynamic are not guaranteed to converge to a global optimum, we test empirically how often each run converges to the best solution of our 20 optimization runs. In Appendix \tabref{tab:global_one_run_optimality} / \tabref{tab:replicator_one_run_optimality}, we show what fraction of the time any single SLSQP / replicator dynamics run finds the best solution, and in Appendix \tabref{tab:global_many_run_optimality} / \tabref{tab:replicator_many_run_optimality}, we show what fraction of the time at least 1 of 10 SLSQP / replicator dynamics runs finds the best solution. First, we note that the tables for SLSQP and the replicator dynamics are quite similar, differing by no more than a few percentage points in all cases. So the replicator dynamics, which are used as a model for how populations evolve strategies, can also be used as an effective optimization algorithm. Second, we see that individual runs of each algorithm are up to 93\% likely to find the best solution in small RandomGames, but they are less likely (as little as 24\% likely) to find the best solution in larger RandomGames and in CoordinationGames. The best of 10 runs, however, finds the best solution $\geq 87\%$ of the time; so random algorithm restarts benefit symmetric strategy optimization.

\section{Conclusion}

There are a variety of reasons we expect to see symmetric games in machine learning systems. The first is mass hardware production, which will proliferate identical robots such as self-driving cars, that require ad-hoc cooperation \citep{stone2010ad}. The second is interaction over the internet, where websites treat all users equally. The third is anonymous protocols, such as voting, which depend on symmetry. As Figure~\ref{fig:symmetry_illustration} shows, symmetric \textit{games} can still arise even when \textit{agents} and the \textit{environment} are asymmetric.

Similarly, there are a variety of reasons we expect to see symmetric strategies in practice. The first is software copies: we expect many artificial agents will run the same source code. The second is optimization - enforcing symmetric strategies exponentially reduces the joint-strategy space. The third is parameter sharing between different neural networks, which can be critical to success in multi-agent RL \citep{foerster2016learning} and may occur as a result of pretraining on large datasets \citep{dasari2020robonet}. The fourth is communication: symmetry (and symmetry breaking) is a key component of zero-shot coordination with other agents and humans \citep{hu2020other,Treutlein2021}. The fifth is that a single-player game with imperfect recall can be interpreted as a multi-agent game in symmetric strategies \citep{Aumann1997}.

When cooperative AI is deployed in the world with symmetric strategy profiles, it raises questions about the properties of such profiles. Would individual agents (or the users they serve) want to deviate from these profiles? Are they robust to small changes in the game or in the executed strategies? Could there be better asymmetric strategy profiles nearby?

Our results yield a mix of good and bad news. Theorems \ref{thm:efg_equilibrium} and \ref{thm:tv_bound} are good news for stability, showing that even local optima in symmetric strategy space are (global) Nash equilibria in a robust sense.  So, with respect to \textit{unilateral} deviations among team members, symmetric optima are relatively stable. On the other hand, this may be bad news for optimization because unilateral deviation cannot improve on a local symmetric optimum (\Cref{ex:marl_iterative_best_response}). Furthermore, \Cref{thm:nondegeneracy_combined} is perhaps bad news, showing that a broad class of symmetric local optima are unstable when considering \textit{joint} deviations in asymmetric strategy space (\Cref{rem:instability}). Empirically, our results with learning algorithms in GAMUT suggest that these unstable solutions may not be uncommon in practice (Section \ref{sec:gamut_instability}).

Future work could build on our analysis in a few different ways. First, we focus on \textit{mixed} strategy space. However, future work may wish to deal with \textit{behavioral} strategy space. While mixed and behavioral strategies are equivalent in games of perfect recall \citep{kuhn1953extensive,aumann1961mixed}, they are not equivalent for games of imperfect recall \citep{piccione1997interpretation}. Second, we focus on players who can play an arbitrary mixed strategy over discrete actions. Future work could consider continuous action space and players who act according to a learned probability distribution. We expect learned probability distributions to pose an additional challenge because, for some probability distributions, our proof of \Cref{thm:efg_equilibrium} in \appref{app:main_proof} will not directly transfer. Our proof requires that agents can always adjust a strategy by moving arbitrarily small probability onto a single action. However, this is not possible with many distributions such as Gaussian distributions. Finally, our experimental results focus on the normal-form representation of games in GAMUT \citep{nudelman2004run}. It would be interesting to see what experimental properties symmetric optima have in sequential decision making benchmarks.

\section*{Acknowledgements}

We thank Stephen Casper, Lawrence Chan, Michael Dennis, Frances Ding, Daniel Filan, Rachel Freedman, Jakob Foerster, Adam Gleave, Rohin Shah, Sam Toyer, Alex Turner, and the anonymous reviewers for helpful feedback on this work.

We are grateful for the support we received for this work. It includes NSF Award IIS-1814056, funding from the DOE CSGF under grant number DE-SC0020347, and funding from the Cooperative AI Foundation, the Center for Emerging Risk Research, the Berkeley Existential Risk Initiative, and the Open Philanthropy Foundation. We also appreciate the Leverhulme Trust’s support for the Centre for the Future of Intelligence.

\bibliography{main}
\bibliographystyle{icml2022}

\newpage
\appendix
\onecolumn
\section{Veil of Ignorance Example} \label{app:veil_of_ignorance}

Two robots arrive at a resource that can be used by only one of them.  They can choose as their action either Cautious or Aggressive. If both choose C, one of them gets the resource at random. If exactly one chooses A, that one gets the resource. If both choose A, the resource is destroyed and neither gets it (utility 0).

Each robot privately knows whether it has High or Low need for the resource (each type occurs independently with probability $1/2$). A robot that has High need values the resource at 6; one that has Low need values it at 4. Robots are on the same team and care about the sum of utilities.

From behind the veil of ignorance, the optimal symmetric strategy (contingency plan) is: when having type L, always play C; when having type H, play A with probability $p=1/6$ (and C otherwise). Note, as guaranteed by \Cref{thm:efg_equilibrium}, that this is a Nash equilibrium. To verify this, observe that from the perspective of a robot with type H, the expected team utility for playing A (when the other follows the given strategy with $p$) is $(1/2) \cdot 6 + (1/2)(1-p) \cdot 6 = 6 - 3p$, and for playing C it is $(1/2)((4+6)/2) + (1/2)\cdot 6 = 5.5$, and if $p=1/6$ these are equal. In contrast, from the perspective of a robot with type L, the expected team utility for playing A (when the other follows the given strategy) is $(1/2) \cdot 4 + (1/2)(1-p) \cdot 4 = 4 - 3p$, and for playing C it is $(1/2)\cdot 4 + (1/2)(p \cdot 6 + (1-p) \cdot (4+6)/2)= 4.5 + p/2$, so C is strictly preferred.

Overall, this optimal symmetric strategy results in an expected team utility of $(1/4) \cdot 4 + (1/2) \cdot ((5/6) \cdot 4 + (1/6) \cdot 6) + (1/4) \cdot ((5/6)(5/6) \cdot 4 + 2(1/6)(5/6) \cdot 6 + (1/36) \cdot 0) = 77/18 \approx 4.28$. Compare this with an asymmetric strategy where robot 1 plays A when it has type H but otherwise C is always played by both robots, which results in a team utility of $(4+5+6+6)/4 = 21/4 = 5.25$. (If types were not private knowledge, $22/4=11/2=5.5$ would be possible.)

In this example, we see how players can coordinate using symmetric strategies from behind the veil of ignorance. Although it is possible to achieve a higher payoff using asymmetric strategies, the optimal symmetric strategy is nonetheless a Nash equilibrium by \Cref{thm:efg_equilibrium}.

\section{Proofs of \secref{sec:theoretical} Results} \label{app:main_proof}

\symlocaloptisnash*
\begin{proof}
We proceed by contradiction. Suppose $s = (s_1, s_2, \ldots, s_n)$ is locally optimal among $\mathcal{P}$-invariant strategy profiles but is not a Nash equilibrium. We will construct an $s'$ arbitrarily close to $s$ with $EU(s') > EU(s)$.

Without loss of generality, suppose $s_1$ is not a best response to $s_{-1}$ but that the pure strategy of always playing $a_1$ is a best response to $s_{-1}$. For an arbitrary probability $p>0$, consider the modified strategy $s_1'$ that plays action $a_1$ with probability $p$ and follows $s_1$ with probability $1 - p$. Now, construct $s' = (s_1', s_2', \ldots, s_n')$ as follows:
\begin{align*}
    s_i' = \begin{cases} s_i' = s_1' & \text{if $i \in \mathcal{P}(1)$} \\ s_i' = s_i & \text{otherwise.} \end{cases}
\end{align*}
In words, $s'$ modifies $s$ by having the members of player $1$'s orbit mix in a probability $p$ of playing $a_1$. We claim for all sufficiently small $p$ that $EU(s') > EU(s)$.

To establish this claim, we break up the expected utility of $s'$ according to cases of how many players in $1$'s orbit play the action $a_1$ because of mixing in $a_1$ with probability $p$. In particular, we observe
\begin{equation*}
    \begin{split}
    EU(s') &= B(m {=} 0, p) EU(s)\\
    &\ \ \ + B(m {=} 1, p) EU((s_1', s_2, \ldots, s_n))\\
    &\ \ \ + B(m {>} 1, p) EU(\ldots),
    \end{split}
\end{equation*}
where $B(m, p)$ is the probability of $m$ successes for a binomial random variable on $m$ independent events that each have success probability $p$ and where $EU(\ldots)$ is arbitrary. Note that the crucial step in writing this expression is grouping the terms with the coefficient $B(m {=} 1, p)$. We can do this because for any player $j \in \mathcal{P}(1)$, there exists a symmetry $\rho \in \Gamma(\Gg)$ with $\rho(j) = 1$.

Now, to achieve $EU(s') > EU(s)$, we require
\begin{equation*}
\begin{split}
EU(s) &< \frac{B(m = 1, p)}{B(m > 0, p)}EU((s_1', s_2, \ldots, s_n))\\
&\ \ \ + \frac{B(m > 1, p)}{B(m > 0, p)}EU(...).
\end{split}
\end{equation*}
We know $EU((s_1', s_2, ..., s_n)) > EU(s)$, but we must deal with the case when $EU(...)$ is arbitrarily negative. Because $\lim_{p \to 0} B(m > 1, p) / B(m = 1, p) = 0$, by making $p$ sufficiently small, $B(m = 1, p) / B(m > 0, p)$ can be made greater than $B(m > 1, p) / B(m > 0, p)$ by an arbitrarily large ratio. The result follows.
\end{proof}

\section{Example of General Symmetry in \Cref{thm:efg_equilibrium}}
\label{app:general_symmetry_example}

\begin{restatable}{example}{exradiostations}
\label{ex:radio_stations}
There are four %
groups of partygoers positioned in a square. %
We number these 1,2,3,4 clockwise, such that, e.g., 1 neighbors 4 and 2.
There is also a robot butler at each vertex of the square. The partygoers can fetch refreshments from the robot butler at their vertex of the square and from the robot butler at adjacent vertices of the square, but it is too far of a walk for them to fetch refreshments from the robot at the opposite vertex.

The game has each robot butler choose what refreshment to hold. For simplicity, suppose each robot butler can hold food or drink. The common payoff of the game is the sum of the utilities of the four groups of partygoers. For each group, if the group cannot fetch drink, the payoff for that group is 0. If the group can only fetch drink, the payoff is 1, and if the group can fetch food and drink, the group's payoff is 2.

The symmetries of the game $\Gamma(\Gg)$ include the set of permutations generated by rotating the robot butlers once clockwise. In standard notation for permutations, $\{ (1,2,3,4),(2,3,4,1),(3,4,1,2), (4,1,2,3) \} \subset \Gamma(\Gg)$.

First, consider applying the theorem to $\mathcal{P} = \Gamma(\Gg)$. In this case, the constraint of $\mathcal{P}$-invariance requires all the robot butlers play the same strategy because all of them are in the same orbit. As we show in the proof below, the optimal $\mathcal{P}$-invariant strategy is then for each robot to hold food with probability $\sqrt{2}-1$. \Cref{thm:efg_equilibrium} tells us that this optimal $\mathcal{P}$-invariant strategy profile is a Nash equilibrium. The proof below also shows how to verify this without the use of \Cref{thm:efg_equilibrium}.

Second, consider applying the theorem to the case where $\mathcal{P}$ consists only of the rotation twice clockwise, i.e., the permutation which maps each robot onto the robot on the opposite vertex of the square. In standard notation for permutations, $\mathcal{P}=\{(3,4,1,2) \}$.
Now, the constraint of $\mathcal{P}$-invariance requires robot butlers at opposite vertices of the square to play the same strategy. However, neighboring robots can hold different refreshments. The optimal $\mathcal{P}$-invariant strategy is for one pair of opposite-vertex robots, e.g., 1 and 3, to hold food and for the other pair of robots, 2 and 4, to hold drink. While it turns out to be immediate that this optimal $\mathcal{P}$-invariant strategy is a Nash equilibrium because it achieves the globally optimal outcome, we could have applied \Cref{thm:efg_equilibrium} to know that this optimal $\mathcal{P}$-invariant strategy profile is a Nash equilibrium \textit{even without knowing what the optimal $\mathcal{P}$-invariant strategy was}.
\end{restatable}
\begin{proof}
We here calculate the optimal $\Gamma(\mathcal{G})$-invariant strategy profile for \Cref{ex:radio_stations}. Let $p$ be the probability of holding drink. By symmetry of the game and linearity of expectation, the expected utility given $p$ is simply four times the expected utility of any one group of partygoers. The utility of one group of partygoers is $0$ with probability $(1-p)^3$, is $1$ with probability $p^3$ and is $2$ with the remaining probability. Hence, the expected utility of a single group of partygoers is
\begin{equation*}
   p^3 + (1- (1-p)^3 - p^3)\cdot 2 = 2 - 2(1-p)^3-p^3.
\end{equation*}
The maximum of this term (and thus the maximum of the overall utility of all neighborhoods) can be found by any computer algebra system %
to be $p=2-\sqrt{2}$, which gives an expected utility of $4(\sqrt{2}-1)\approx 1.66$.

To double-check, we can also calculate the symmetric Nash equilibrium of this game. It's easy to see that that Nash equilibrium must be mixed and therefore must make each robot butler indifferent about what to hold. 
So let $p$ again be the probability with which each robot butler holds drink.
The expected utility of holding drink relative to holding nothing for any of the three relevant neighborhoods is $2(1-p)^2$.
(Holding drink lifts the utility of a group of partygoers from 0 to 2 if they can not already fetch drink. Otherwise, it doesn't help to hold drink.)
The expected utility of holding food relative to broadcasting nothing is simply $p^2$. We can find the symmetric Nash equilibrium by setting
\begin{equation*}
    2(1-p)^2=p^2,
\end{equation*}
which %
gives
us the same solution for $p$ as before.
\end{proof}

\section{Robustness of \Cref{thm:efg_equilibrium} to Payoff and Strategy Perturbations} \label{app:robustness}

The first type of robustness we consider is robustness to perturbations in the game's payoff function. Formally, we define an \textit{$\epsilon$-perturbation} of a game as follows:

\begin{restatable}{definition}{defepsperturbation}
\label{def:eps_perturbation}
Let $\Gg$ be a normal-form game with utility function $\mu$. For some $\epsilon > 0$, we call $\Gg'$ an \textit{$\epsilon$-perturbation} of $\Gg$ if $\Gg'$ has utility function $\mu'$ satisfying $\max_{i \in N, a \in A} \lvert u_i'(a) - u_i(a) \rvert \leq \epsilon$.
\end{restatable}

There are a variety of reasons why $\epsilon$-perturbations might arise in practice. Our game model may contain errors such as the game not being perfectly symmetric; the players' preferences might fluctuate over time; or we might have used function approximation to learn the game's payoffs. With \Cref{prp:payoff_perturbed_eps_nash}, we note a generic observation about Nash equilibria showing that our main result, \Cref{thm:efg_equilibrium}, is robust in the sense of degrading \textit{linearly} in the payoff perturbation's size:

\begin{restatable}{proposition}{prppayoffperturbedepsnash}
\label{prp:payoff_perturbed_eps_nash}
Let $\Gg$ be a common-payoff normal-form game, and let $s^*$ be a locally-optimal $\mathcal{P}$-invariant strategy profile for some subset of game symmetries $\mathcal{P} \subseteq \Gamma(\Gg)$. Suppose $G'$ is an $\epsilon$-perturbation of $\Gg$. Then $s^*$ is a $2\epsilon$-Nash equilibrium in $\Gg'$.
\end{restatable}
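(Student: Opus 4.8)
The plan is to combine two observations: first, by Theorem~\ref{thm:efg_equilibrium}, the locally optimal $\mathcal{P}$-invariant profile $s^*$ is an exact Nash equilibrium \emph{in the original game} $\Gg$; second, passing from $\Gg$ to an $\epsilon$-perturbation $\Gg'$ can change any player's expected utility by at most $\epsilon$, regardless of the strategy profile. The key fact I would establish up front is a \emph{uniform} bound on the change in expected utility: for any fixed strategy profile $t = (t_1,\ldots,t_n)$ and any player $i$,
\begin{equation*}
\lvert EU_i^{\Gg'}(t) - EU_i^{\Gg}(t) \rvert = \left\lvert \sum_{a \in A} \bigl(u_i'(a) - u_i(a)\bigr)\prod_{j=1}^n t_j(a_j) \right\rvert \leq \epsilon \sum_{a \in A}\prod_{j=1}^n t_j(a_j) = \epsilon,
\end{equation*}
where the inequality uses $\lvert u_i'(a) - u_i(a)\rvert \le \epsilon$ from Definition~\ref{def:eps_perturbation} and the final equality holds because the product terms are the joint probabilities of a distribution and hence sum to $1$. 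This bound is the whole engine of the argument: it says the expected-utility map is perturbed by at most $\epsilon$ \emph{pointwise in strategy space}.

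With this in hand I would run the standard ``two-sided slack'' estimate for $\epsilon$-Nash equilibria. Fix a player $i$ and an arbitrary deviation $s_i' \in \Delta(A_i)$. Writing $s^*_{-i}$ for the other players' strategies under $s^*$, I would chain the inequalities
\begin{equation*}
EU_i^{\Gg'}(s_i', s^*_{-i}) \leq EU_i^{\Gg}(s_i', s^*_{-i}) + \epsilon \leq EU_i^{\Gg}(s^*) + \epsilon \leq EU_i^{\Gg'}(s^*) + 2\epsilon,
\end{equation*}
where the first and third steps each apply the uniform bound above (once to the deviation profile $(s_i', s^*_{-i})$, once to $s^*$ itself), and the middle step is exactly the Nash condition for $s^*$ in $\Gg$, which we get for free from Theorem~\ref{thm:efg_equilibrium}. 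Since $s_i'$ and $i$ were arbitrary, no player can gain more than $2\epsilon$ by deviating in $\Gg'$, which is precisely the definition of a $2\epsilon$-Nash equilibrium.

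I do not anticipate a serious obstacle here; the result is essentially a soft continuity statement and the factor of $2$ comes naturally from paying $\epsilon$ twice (once on each side of the inequality to sandwich the perturbed payoffs around the unperturbed Nash condition). The only point requiring a little care is making sure the perturbation bound is applied to \emph{both} the candidate deviation profile and the equilibrium profile, rather than just one; applying it once would give a weaker (or invalid) $\epsilon$-bound. It is also worth noting explicitly that the argument invokes Theorem~\ref{thm:efg_equilibrium} as a black box to convert the hypothesis ``$s^*$ is locally optimal $\mathcal{P}$-invariant'' into ``$s^*$ is a Nash equilibrium of $\Gg$,'' so the proposition is genuinely a robustness corollary of the main theorem rather than an independent result.
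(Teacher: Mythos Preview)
Your proposal is correct and follows essentially the same approach as the paper's proof: invoke Theorem~\ref{thm:efg_equilibrium} to obtain that $s^*$ is a Nash equilibrium in $\Gg$, then observe that an $\epsilon$-perturbation shifts any expected utility by at most $\epsilon$, so the gap between any deviation and the equilibrium payoff changes by at most $2\epsilon$. The paper's version is terser (it simply notes that ``the difference between any two actions' expected payoffs has changed by at most $2\epsilon$''), but your more explicit chain of inequalities spells out exactly the same idea.
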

\begin{proof}
By \Cref{thm:efg_equilibrium}, $s^*$ is a Nash equilibrium in $\Gg$. After perturbing $\Gg$ by $\epsilon$ to form $\Gg'$, payoffs have increased / decreased at most $\pm \epsilon$, so the difference between any two actions' expected payoffs has changed by at most $2\epsilon$.
\end{proof}

The second type of robustness we consider is robustness to symmetric solutions that are only approximate. For example, we might try to find a symmetric local optimum through an approximate optimization method, or the evolutionary dynamics among players' strategies might lead them to approximate local symmetric optima. Again, a generic result about Nash equilibria shows that the guarantee of \Cref{thm:efg_equilibrium} degrades linearly in this case:
\begin{restatable}{theorem}{thmtvbound}
\label{thm:tv_bound}
Let $\Gg$ be a common-payoff normal-form game, and let $s^*$ be a locally-optimal $\mathcal{P}$-invariant strategy profile for some subset of game symmetries $\mathcal{P} \subseteq \Gamma(\Gg)$. Suppose $s$ is a strategy profile with total variation distance $TV(s, s^*) \leq \delta$. Then $s$ is an $\epsilon$-Nash equilibrium with $\epsilon = 4 \delta \max_{i \in N, a \in A} \lvert u_i(a) \rvert$.
\end{restatable}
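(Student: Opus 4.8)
The plan is to lean on \Cref{thm:efg_equilibrium}, which already tells us that $s^*$ is an \emph{exact} Nash equilibrium, and then to control how quickly unilateral deviation incentives can grow as we slide from $s^*$ to the nearby profile $s$. The single tool I would use throughout is that expected utility is linear in the induced joint distribution over action profiles, so that a bounded change in that distribution (measured in total variation) produces a correspondingly bounded change in expected utility. Writing $P_s(a) = \prod_j s_j(a_j)$ for the distribution over $A$ induced by a profile $s$, we have $EU(s) = \sum_a u(a) P_s(a)$, and hence for any two profiles
\[
|EU(s) - EU(s')| \le \max_{i,a}|u_i(a)| \sum_a |P_s(a) - P_{s'}(a)| = 2\max_{i,a}|u_i(a)|\, TV(s,s').
\]
The identical estimate holds when player $i$ is pinned to a fixed pure action $a_i$ and only the opponents vary, now using the marginal distributions over $A_{-i}$.

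Next I would reduce to pure deviations: because $EU_i(s_i', s_{-i})$ is affine in $s_i'$, the most profitable unilateral deviation at $s$ is attained at some pure $a_i$, so it suffices to bound $EU_i(a_i, s_{-i}) - EU_i(s)$. The heart of the argument is the telescoping decomposition through $s^*$,
\[
EU_i(a_i, s_{-i}) - EU_i(s) = \bigl[EU_i(a_i, s_{-i}) - EU_i(a_i, s^*_{-i})\bigr] + \bigl[EU_i(a_i, s^*_{-i}) - EU_i(s^*)\bigr] + \bigl[EU_i(s^*) - EU_i(s)\bigr].
\]
The middle bracket is $\le 0$ since $s^*$ is a Nash equilibrium, so the pure action $a_i$ cannot beat $s^*_i$ against $s^*_{-i}$. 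The last bracket is at most $2\delta\max_{i,a}|u_i(a)|$ by the Lipschitz estimate above together with $TV(s,s^*)\le\delta$.

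The first bracket is where I expect the only genuine subtlety. I want to bound it by the total variation distance between the opponents' marginals $s_{-i}$ and $s^*_{-i}$ and then tie that back to $TV(s,s^*)$. The key fact is the data-processing inequality for total variation: since $P_{s_{-i}}$ is the marginal of $P_s$ over $A_{-i}$, obtained by summing out coordinate $i$ (a deterministic coarse-graining), we get $TV(s_{-i}, s^*_{-i}) \le TV(s, s^*) \le \delta$. Feeding this into the opponent-only Lipschitz estimate bounds the first bracket by $2\delta\max_{i,a}|u_i(a)|$ as well.

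Summing the three brackets yields $EU_i(a_i, s_{-i}) - EU_i(s) \le 4\delta\max_{i,a}|u_i(a)|$, exactly the claimed $\epsilon$, with the constant $4$ arising transparently as the two $2\delta\max_{i,a}|u_i(a)|$ contributions. The main thing I would verify carefully is that $TV$ on strategy profiles is meant as the total variation distance between the induced \emph{joint} action-profile distributions, so that the marginalization step is licensed; under that convention the proof is short and the constant is sharp. I regard this marginalization/data-processing step as the only place one could stumble, since the rest is the standard linearity-plus-triangle-inequality bookkeeping.
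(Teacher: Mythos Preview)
Your proof is correct and uses the same core ingredients as the paper---the Nash property of $s^*$ from \Cref{thm:efg_equilibrium}, a Lipschitz bound on expected utility via total variation, and a three-term telescoping---but the telescoping is organized differently. The paper fixes a pure action $a_i$ in the support of $s_i^*$ and bounds $EU_i(a_i',s_{-i})-EU_i(a_i,s_{-i})$ by inserting $s_{-i}^*$ at both endpoints; each outer bracket is an opponent-only perturbation costing $2\delta\max|u|$, and the middle bracket is $\le 0$ by the Nash property. You instead bound the deviation incentive $EU_i(a_i,s_{-i})-EU_i(s)$ directly, routing through $(a_i,s_{-i}^*)$ and then $s^*$: one outer bracket is an opponent-only perturbation (handled, as you correctly note, by the data-processing inequality for marginals), and the other is a full-profile perturbation. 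Your route has the modest advantage that the quantity you bound is exactly the $\epsilon$-Nash incentive at $s$, whereas the paper's bound on the gap between two pure actions still needs a small (tacit) step to reach a statement about $EU_i(s_i,s_{-i})$. Both arguments land on the same constant $4\delta\max_{i,a}|u_i(a)|$, and your caveat about the intended meaning of $TV(s,s^*)$ as total variation of the induced joint distributions is the right one---the paper's own computation implicitly uses $\sum_{a_{-i}}|s_{-i}(a_{-i})-s_{-i}^*(a_{-i})|\le 2\,TV(s,s^*)$, which is precisely that marginalization step.
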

\begin{proof}
Consider the perspective of an arbitrary player $i$. The difference in expected utility of playing any action $a_i$ between the opponent strategy profiles $s_{-i}$ and $s_{-i}^*$ is given by:
\begin{align*}
&\ \ \ \left \lvert EU_i(a_i, s_{-i}) - EU_i(a_i, s_{-i}^*) \right \rvert \\
&= \left \lvert \sum_{a_{-i} \in A_{-i}} s_{-i} (a_{-i}) u_i(a_i, a_{-i}) \right. \\
& \ \ \ - \left. \sum_{a_{-i} \in A_{-i}} s_{-i}^* (a_{-i}) u_i(a_i, a_{-i}) \right \rvert \\
&\leq \sum_{a_{-i} \in A_{-i}} \lvert u_i(a_i, a_{-i}) \rvert \left \lvert s_{-i}(a_{-i}) - s_{-i}^* (a_{-i}) \right \rvert \\
&\leq 2 TV(s, s^*) \max_{i \in N, a \in A} \lvert u_i(a) \rvert \\
&\leq 2 \delta \max_{i \in N, a \in A} \lvert u_i(a) \rvert.
\end{align*}

In particular, let $a_i$ be an action in the support of $s_i^*$, and let $a_i'$ be any other action. Then, using the above, we have:
\begin{align*}
    &\ \ \ EU_i(a_i', s_{-i}) - EU_i(a_i, s_{-i}) \\
    &= EU_i(a_i', s_{-i}) - EU_i(a_i', s_{-i}^*) + EU_i(a_i', s_{-i}^*) \\
    &\ \ \  - EU_i(a_i, s_{-i}^*) + EU_i(a_i, s_{-i}^*) - EU_i(a_i, s_{-i}) \\
    &\leq EU_i(a_i', s_{-i}) - EU_i(a_i', s_{-i}^*)\\
    &\ \ \ + EU_i(a_i, s_{-i}^*) - EU_i(a_i, s_{-i}) \\
    &\leq \left \lvert EU_i(a_i', s_{-i}) - EU_i(a_i', s_{-i}^*) \right \rvert \\
    &\ \ \  + \left \lvert EU_i(a_i, s_{-i}) - EU_i(a_i, s_{-i}^*) \right \rvert \\
    &\leq 4 \delta \max_{i \in N, a \in A} \lvert u_i(a) \rvert,
\end{align*}
where $EU_i(a_i', s_{-i}^*) - EU_i(a_i, s_{-i}^*) \leq 0$ because $s_i^*$ is a Nash equilibrium by \Cref{thm:efg_equilibrium}.
\end{proof}

By \Cref{thm:tv_bound}, we have a robustness guarantee in terms of the total variation distance between an approximate local symmetric optimum and a true local symmetric optimum. Without much difficulty, we can also convert this into a robustness guarantee in terms of the Kullback-Leibler divergence:

\begin{restatable}{corollary}{corklbound}
Let $\Gg$ be a common-payoff normal-form game, and let $s^*$ be a locally-optimal $\mathcal{P}$-invariant strategy profile for some subset of game symmetries $\mathcal{P} \subseteq \Gamma(\Gg)$. Suppose $s$ is a strategy profile with Kullback-Leibler divergence satisfying $D_{KL}(s \vert\vert s^*) \leq \nu$ or $D_{KL}(s^* \vert\vert s) \leq \nu$. Then $s$ is an $\epsilon$-Nash equilibrium with $\epsilon = 2 \sqrt{2 \nu} \max_{i \in N, a \in A} \lvert u_i(a) \rvert$.
\end{restatable}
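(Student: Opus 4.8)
The plan is to derive this corollary directly from \Cref{thm:tv_bound} by invoking Pinsker's inequality, which converts a bound on Kullback--Leibler divergence into a bound on total variation distance. Since a strategy profile $s = (s_1, \ldots, s_n)$ is simply the product distribution $\prod_i s_i$ over the action set $A$, I would treat $s$ and $s^*$ as distributions over $A$ and apply Pinsker in the form $TV(p, q) \le \sqrt{\tfrac{1}{2} D_{KL}(p \vert\vert q)}$, where $TV$ is the total-variation distance normalized so that $TV(p,q) = \tfrac{1}{2}\lVert p - q \rVert_1$ --- the same convention implicit in the proof of \Cref{thm:tv_bound}, where $\sum_{a_{-i}} \lvert s_{-i}(a_{-i}) - s_{-i}^*(a_{-i}) \rvert$ is bounded by $2\,TV(s,s^*)$.

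First I would handle the two hypotheses uniformly. Because total variation distance is symmetric, $TV(s, s^*) = TV(s^*, s)$, so whether we are given $D_{KL}(s \vert\vert s^*) \le \nu$ or $D_{KL}(s^* \vert\vert s) \le \nu$, applying Pinsker in the matching direction yields the single conclusion $TV(s, s^*) \le \sqrt{\nu / 2}$ in both cases. This is the only place where allowing \emph{either} direction of the KL hypothesis matters, and it costs nothing precisely because the quantity we ultimately need, $TV$, is direction-agnostic.

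Next I would feed this into \Cref{thm:tv_bound} with $\delta = \sqrt{\nu/2}$. That theorem then guarantees that $s$ is an $\epsilon$-Nash equilibrium with
\[
\epsilon = 4\delta \max_{i \in N, a \in A} \lvert u_i(a) \rvert = 4 \sqrt{\nu/2}\, \max_{i \in N, a \in A} \lvert u_i(a) \rvert = 2\sqrt{2\nu}\, \max_{i \in N, a \in A} \lvert u_i(a) \rvert,
\]
using $4/\sqrt{2} = 2\sqrt{2}$, which is exactly the claimed bound.

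There is no genuine obstacle here; the only point requiring care --- rather than difficulty --- is that the normalization conventions line up. The factor of $\tfrac12$ in the definition of $TV$ and the $\tfrac12$ inside the square root of Pinsker's inequality must be the versions consistent with the $2\,TV$ that appears in the proof of \Cref{thm:tv_bound}. As long as these match, the constants collapse to exactly $2\sqrt{2\nu}$; a different convention would only rescale this constant and would not affect the essential conclusion, namely that the $\epsilon$-Nash guarantee degrades like $\sqrt{\nu}$ in the KL divergence.
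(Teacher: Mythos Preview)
Your proposal is correct and follows essentially the same approach as the paper: apply Pinsker's inequality (using the symmetry of total variation to cover either direction of the KL hypothesis) to obtain $TV(s,s^*)\le\sqrt{\nu/2}$, then invoke \Cref{thm:tv_bound} with $\delta=\sqrt{\nu/2}$ and simplify $4\sqrt{\nu/2}=2\sqrt{2\nu}$. Your additional remarks about normalization conventions are sound but not needed beyond what the paper already assumes.
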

\begin{proof}
By Pinsker's inequality \citep{tsybakov2009introduction}, we have
\begin{align*}
    TV(s, s^*) &\leq \sqrt{\frac{1}{2} D_{KL}(s \vert\vert s^*)}.
\end{align*}
As $TV(s, s^*) = TV(s^*, s)$ and with a similar application of Pinsker's inequality, we have by assumption that $TV(s, s^*) \leq \sqrt{\nu / 2}$. Applying \Cref{thm:tv_bound} with $\delta = \sqrt{\nu / 2}$ yields the result.
\end{proof}

\section{Proof of \secref{sec:asymmetric_stability} Results}
\label{app:nondegeneracy_combined}

First, we clarify how our notion of non-degeneracy compares to the existing literature. If a two-player game $\Gg$ is non-degenerate in the usual sense from the literature, it is non-degenerate in the sense of \Cref{def:nondegeneracy}. Moreover, if $\Gg$ is common-payoff, then for each player $i$, we can define a two-player game played by $i$ and another single player who controls the strategies of $N - \{i\}$. If for all $i$ these two-player games are non-degenerate in the established sense, then $\Gg$ is non-degenerate in the sense of \Cref{def:nondegeneracy}.

Now, we proceed with the proof of \secref{sec:asymmetric_stability} results:

\thmnondegeneracycombined*
\begin{proof}
Let $s$ be a locally optimal $\mathcal{P}$-invariant strategy profile. By \Cref{thm:efg_equilibrium}, $s$ is a Nash equilibrium. Because $\Gg$ is non-degenerate, so is $s$. We prove the claim by proving that (1) if $s$ is deterministic, it is locally optimal in asymmetric strategy space; and (2) if $s$ is mixed then it is not locally optimal in asymmetric strategy space.\\

(1) The deterministic case:
Let $s$ be deterministic. Now consider a potentially asymmetric strategy profile $s'$. We must show as $s'$ becomes sufficiently close to $s$ that $EU(s')\leq EU(s)$.

Let $\epsilon_1,\epsilon_2,...,\epsilon_n$ and $\hat s_1,...,\hat s_n$ be such that for $i\in N$, $s_i'$ can be interpreted as following $s_i$ with probability $1-\epsilon_i$ and following $\hat s_i$ with probability $\epsilon_i$, where $s_i \notin \mathrm{supp}(\hat s_i)$. Then (similar to the proof of \Cref{thm:efg_equilibrium}), we can write
\begin{align*}
     &\quad\ \ EU(s')\\
     &= \left( \prod_{i\in N} (1-\epsilon_i) \right) EU(s)\\
     &+ \sum_{j \in N} \epsilon_j \left( \prod_{i\in N-\{ j \}} 1-\epsilon_i \right) \cdot EU(\hat s_j, s_{-j})\\
     &+ \sum_{j,l\in N:j\neq l} \epsilon_j \epsilon_i \left( \prod_{i\in N-\{ j,l \}} 1-\epsilon_i \right) \cdot EU(\hat s_j, \hat s_l, s_{-j-l})\\
     &+ ...
\end{align*}
The second line is the expected value if everyone plays $s$, the third line is the sum over the possibilities of one player $j$ deviating to $\hat s_j$, and so forth. We now make two observations. First, because $s$ is a Nash equilibrium, the expected utilities $EU(\hat s_j, s_{-j})$ in the third line are all at most as big as $EU(s)$. Now consider any later term corresponding to the deviation of some set $C$, containing at least two players $i,j$. Note that it may be $EU(\hat s_C, s_{-C})>EU(s)$. However, this term is multiplied by $\epsilon_i\epsilon_j$. Thus, as the $\epsilon$ go to $0$, the significance of this term in the average vanishes in comparison to that of both the terms corresponding to the deviation of just $i$ and just $j$, which are multiplied only by $\epsilon_i$ and $\epsilon_j$, respectively. By non-degeneracy, it is $EU(\hat s_i, s_{-i})<EU(s)$ or $EU(\hat s_j, s_{-j})<EU(s)$. Thus, if the $\epsilon_i$ are small enough, the overall sum is less than $EU(s)$.

(2) The mixed case: Let $s$ be mixed.
We proceed by constructing a strategy profile $s'$ that is arbitrarily close to $s$ with $EU(s') > EU(s)$.

Let $m$ be the largest integer where for all subsets of players $C \subseteq N$ with $\lvert C \rvert \leq m$, the expected payoff is constant across all joint deviations to $a_i \in \mathrm{supp}(s_i)$ for all $i \in C$, i.e., where $EU(a_C, s_{-C}) = EU(s)$ for all $a_C \in \mathrm{supp}(s_C)$. As $s$ is a non-degenerate Nash equilibrium, $1 \leq m < n$.

By definition of $m$, there exists a subset of players $C \subset N$ with $\lvert C \rvert = m$ and choice of actions $a_C \in \mathrm{supp}(s_C)$ where $EU(a_j, a_C, s_{-j-C})$ is not constant across the available actions $a_j \in A_j$ for some player $j \in N - C$. Denote player $j$'s best response to the joint deviation $a_C$ as $a_j^* \in \argmax_{a_j} EU(a_j, a_C, s_{-j-C})$, and note $EU(a_j, a_C, s_{-j-C}) > EU(a_C,s_{-C}) = EU(s)$.

To construct $s'$, modify $s$ by letting player $j$ mix according to $s_j$ with probability $(1 - \epsilon)$ and play action $a_j$ with probability $\epsilon$. Similarly, let each player $i \in C$ mix according to $s_i$ with probability $(1 - \epsilon)$ and play their action $a_i$ specified by $a_C$ with probability $\epsilon$. Because we allow $\epsilon > 0$ to be arbitrarily small, all we have left to show is $EU(s') > EU(s)$.

Observe as before that we can break $EU(s')$ up into cases based on the number of players who deviate according to the modified probability $\epsilon$:
\begin{align*}
    &\quad\ \ EU(s')\\
    &= \left( \prod_{k\in C \cup \{j\}} (1-\epsilon) \right) EU(s)\\
    &+ \sum_{l \in C \cup \{j\}} \epsilon \left( \prod_{k \in C \cup \{ j \} : k \neq l}  1-\epsilon \right) EU(a_l, s_{-l})\\
    &+ ... \\
    &+ \left( \prod_{k \in C \cup \{j\}} \epsilon \right) EU(a_j, a_C, s_{-j-C}).
\end{align*}
By construction, every value in the expected value calculation $EU(s')$ is equal to $EU(s)$ except for the last value $EU(a_j, a_C, s_{-j-C})$, which is greater than $EU(s)$. We conclude $EU(s') > EU(s)$.
\end{proof}

\section{GAMUT Details and Additional Experiments} \label{app:gamut_replicator}

\subsection{GAMUT Games} \label{app:gamut_definitions}
In \secref{sec:gamut_experimental_setup}, we analyzed all three classes of symmetric GAMUT games: RandomGame, CoordinationGame, and CollaborationGame. Below, we give a formal definiton of these game classes:

\begin{restatable}{definition}{defrandomgame}
A \textit{RandomGame} with $|N|$ players and $|A|$ actions assumes $A_i = A_j$ for all $i, j$ and draws a payoff from $\mathit{Unif}(-100, 100)$ for each unordered action profile $a \in A$.
\end{restatable}

\begin{restatable}{definition}{defcoordinationgame}
A \textit{CoordinationGame} with $|N|$ players and $|A|$ actions assumes $A_i = A_j$ for all $i, j$. For each unordered action profile $a \in A$ with $a_i = a_j$ for all $i, j$, it draws a payoff from $\mathit{Unif}(0, 100)$; for all other unordered action profiles, it draws a payoff from $\mathit{Unif}(-100, 0)$.
\end{restatable}

\begin{restatable}{definition}{defcollaborationgame}
A \textit{CollaborationGame} with $|N|$ players and $|A|$ actions assumes $A_i = A_j$ for all $i, j$. For each unordered action profile $a \in A$ with $a_i = a_j$ for all $i, j$, the payoff is 100; for all other unordered action profiles, it draws a payoff from $\mathit{Unif}(-100, 99)$.
\end{restatable}

Note that these games define payoffs for each \textit{unordered} action profile because the games are totally symmetric (\Cref{def:total_symmetry}).
\tabref{tab:gamut_examples} gives illustrative examples.

\begin{table}
    \begin{center}
    \setlength{\extrarowheight}{2pt}
    \begin{tabular}{cc|c|c|}
      & \multicolumn{1}{c}{} & \multicolumn{2}{c}{Player $2$}\\
      & \multicolumn{1}{c}{} & \multicolumn{1}{c}{$\alpha$}  & \multicolumn{1}{c}{$\beta$} \\\cline{3-4}
      \multirow{2}*{Player $1$}  & $\alpha$ & $u_{\alpha \alpha}$ & $u_{\alpha \beta}$ \\\cline{3-4}
      & $\beta$ & $u_{\alpha \beta}$ & $u_{\beta \beta}$ \\\cline{3-4}
    \end{tabular}
    \caption{A payoff matrix with $|N| = 2$ and $A_1 = A_2 = \{\alpha, \beta\}$ to illustrate GAMUT games. In a RandomGame, $u_{\alpha \alpha}$, $u_{\alpha \beta}$, and $u_{\beta \beta}$ are i.i.d. draws from $\mathit{Unif}(-100, 100)$. In a CoordinationGame, $u_{\alpha \alpha}$ and $u_{\beta \beta}$ are i.i.d. draws from $\mathit{Unif}(0, 100)$ while $u_{\alpha \beta}$ is a draw from $\mathit{Unif}(-100, 0)$. In a CollaborationGame, $u_{\alpha \alpha} = u_{\beta \beta} = 100$, and $u_{\alpha \beta}$ is a draw from $\mathit{Unif}(-100, 99)$.}
    \label{tab:gamut_examples}
    \end{center}
\end{table}

\subsection{Proof of Non-degeneracy in GAMUT}

\begin{restatable}{proposition}{prpgamutmeasurezerodegen}
\label{prp:gamut_measure_zero_degen}
Drawing a degenerate game is a measure-zero event in RandomGames, CoordinationGames, and CollaborationGames, i.e., these games are almost surely non-degenerate.
\end{restatable}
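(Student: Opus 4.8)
The plan is to show that the set of payoff vectors producing a degenerate game is contained in a finite union of measure-zero \emph{coincidence} events, each asserting that two independently drawn continuous payoffs coincide (or that a random payoff attains a value its distribution never attains). Since the three game families sample their free payoffs from distributions that are absolutely continuous on the coordinates not fixed by construction, each such event has probability zero, and a finite union of probability-zero events is null.

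The first step is to unwind \Cref{def:nondegeneracy} and observe that, for a \emph{fixed} candidate equilibrium support, both clauses of degeneracy are systems of linear equalities among individual payoff entries, with no dependence on the mixing probabilities. In the deterministic case (i), a degenerate pure equilibrium $a^\ast$ forces, for at least two players $i$, an indifference $u(a_i^\ast, a_{-i}^\ast) = u(a_i', a_{-i}^\ast)$ with $a_i' \neq a_i^\ast$, and any one of these equalities already suffices. In the mixed case (ii), degeneracy with support $S = (S_1,\dots,S_n)$ requires $u(a_i, a_{-i}) = u(a_i', a_{-i})$ for every mixing player $i$, every $a_i, a_i' \in S_i$, and every $a_{-i} \in \prod_{k \neq i} S_k$; since $EU_i(a_i, a_{-i})$ with pure $a_i$ and pure $a_{-i}$ is just the entry $u(a_i, a_{-i})$, this is again a pure equality constraint. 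Because any mixed equilibrium has a player with $\lvert S_i \rvert \geq 2$, it yields at least one such equality, and I would select one per support. The key combinatorial fact I would then prove is that $(a_i, a_{-i})$ and $(a_i', a_{-i})$ with $a_i \neq a_i'$ are \emph{distinct} unordered action profiles and that at most one of them can be a ``diagonal'' (all-equal) profile; both follow from the observation that replacing one coordinate's action by a different action changes the underlying multiset of actions.

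The second step is to verify, game by game, that the chosen equality is a nontrivial coincidence. In a RandomGame every distinct unordered profile carries an independent $\mathit{Unif}(-100,100)$ draw, so the equality holds with probability zero. In a CoordinationGame the diagonal and off-diagonal draws are independent and continuous (on $(0,100)$ and $(-100,0)$), so any equality between two distinct entries---both off-diagonal, or one diagonal and one off-diagonal---is again a probability-zero coincidence. The CollaborationGame is the delicate case: its diagonal entries are deterministically $100$, so an equality between two diagonal entries would be trivially true and would \emph{not} cut out a null set. This is exactly where the combinatorial fact is needed: since at most one of the two profiles is diagonal, each required equality is either between two independent off-diagonal $\mathit{Unif}(-100,99)$ draws (probability zero) or between the fixed value $100$ and an off-diagonal draw lying in $(-100,99)$, which can never equal $100$ (the event is empty). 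Either way the equality carves out a probability-zero set.

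Finally, I would assemble the bound: the degenerate-game set is contained in the union, over the finitely many pure profiles together with player--deviation pairs (case i) and the finitely many supports together with one selected deviation each (case ii), of these probability-zero events, and a finite union of null events is null, proving the games are almost surely non-degenerate. The main obstacle is the correct reading of \Cref{def:nondegeneracy}(ii)---recognizing that it constrains only payoff entries and not equilibrium probabilities, so that the support can be fixed and the condition linearized---together with the CollaborationGame bookkeeping, where one must rule out the trivially-satisfied diagonal equalities and exploit that the off-diagonal range $(-100,99)$ excludes the fixed diagonal value $100$.
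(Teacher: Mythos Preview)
Your proposal is correct and follows the same approach as the paper: reduce degeneracy to a single payoff-entry equality $u(a_i,a_{-i}) = u(a_i',a_{-i})$ with $a_i \neq a_i'$, then argue this is a probability-zero coincidence between independent continuous draws (or, for CollaborationGame, between a random off-diagonal entry and the fixed diagonal value $100$ outside its support). The paper's proof is terser---it does not spell out the multiset argument that $(a_i,a_{-i})$ and $(a_i',a_{-i})$ are distinct unordered profiles or that at most one can be diagonal---but the underlying idea is identical.
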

\begin{proof}
By \Cref{def:nondegeneracy}, in order for a game to be degenerate, there must exist a player $i$, a set of actions for the other players $a_{-i}$, and a pair of actions $a_i \neq a_i'$ with $EU(a_i, a_{-i}) = EU(a_i', a_{-i})$. In RandomGames, CoordinationGames, and CollaborationGames, $EU(a_i, a_{-i}) = \mu(a_i, a_{-i})$ and $EU(a_i', a_{-i}) = \mu(a_i', a_{-i})$ are continuous random variables that are independent of each other. (Or, in the case of a CollaborationGame, $\mu(a_i, a_{-i})$ may be a fixed value outside of the support of $\mu(a_i', a_{-i})$.) So $EU(a_i, a_{-i}) = EU(a_i', a_{-i})$ is a measure-zero event.
\end{proof}

\subsection{Replicator Dynamics} \label{app:replicator_dynamics}

Consider a game where all players share the same action set, i.e., with $A_i = A_j$ for all $i, j$, and consider a totally symmetric strategy profile $s = (s_1, s_1, \ldots, s_1)$. In the replicator dynamic, each action can be viewed as a species, and $s_1$ defines the distribution of each individual species (action) in the overall population (of actions). At each iteration of the replicator dynamic, the prevalence of an individual species (action) grows in proportion to its relative fitness in the overall population (of actions). In particular, the replicator dynamic evolves $s_1(a)$ over time $t$ for each $a \in A_1$ as follows:
\begin{equation*}
    \frac{d}{dt} s_1(a) = s_1(a) \left[ EU(a, s_{-1}) - EU(s) \right].
\end{equation*}
To simulate the replicator dynamic with Euler's method, we need to choose a stepsize and a total number of iterations. Experimentally, we found the fastest convergence with a stepsize of 1, and we found that 100 iterations sufficed for convergence; see Figure \ref{fig:replicator_dynamics_convergence}. For good measure, we ran 10,000 iterations of the replicator dynamic in all of our experiments.

\begin{figure}[t]
\centering
\includegraphics[width=0.5\columnwidth]{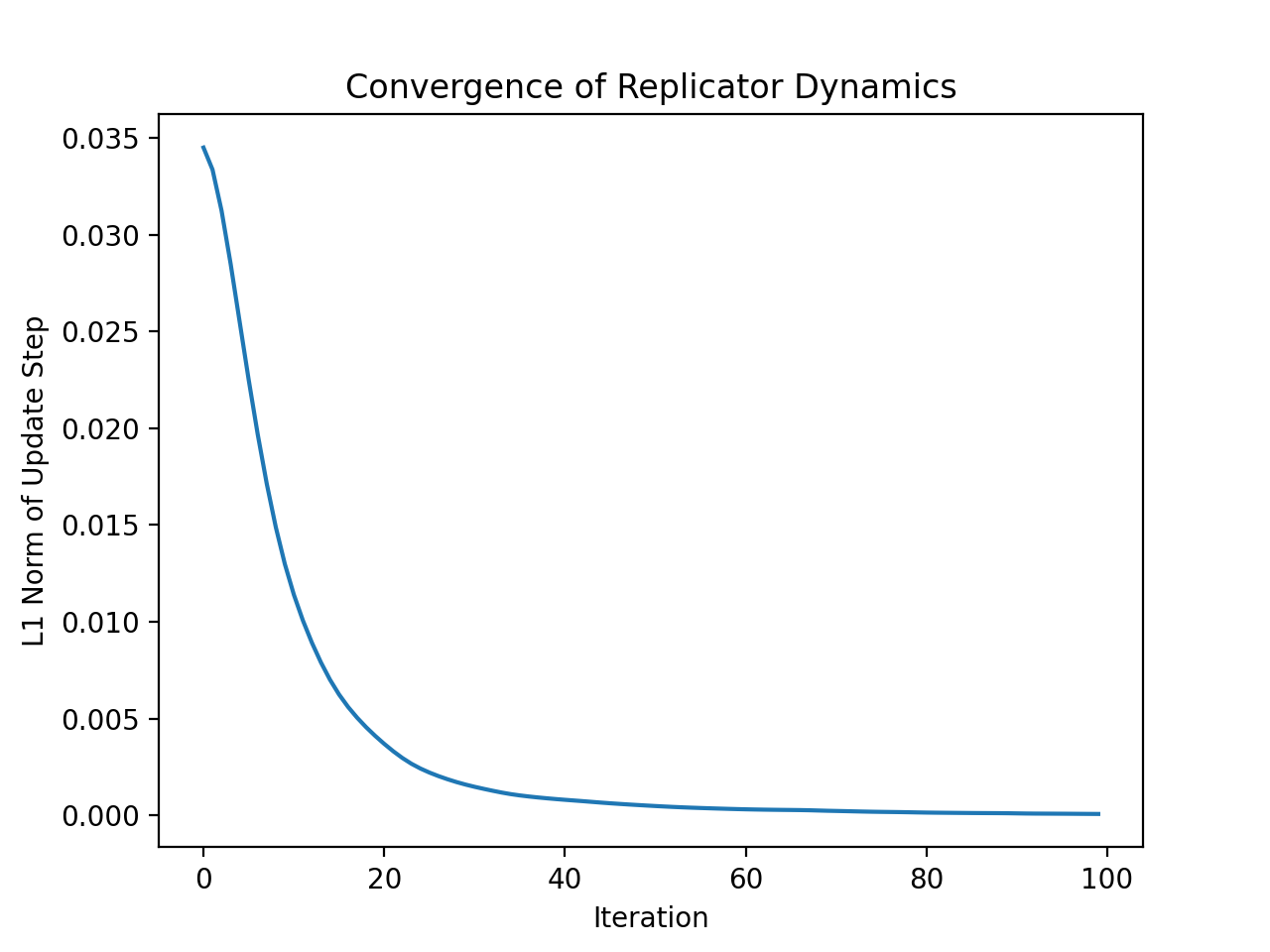}
\caption{The magnitude of the replicator dynamics update step averaged over 10,000 RandomGames\protect\footnotemark \ with 2 players and 2 actions. Although this plot indicates that the replicator dynamics converge by 100 iterations, we ran 10,000 iterations for good measure in all of our experiments.}
\label{fig:replicator_dynamics_convergence}
\end{figure}

\footnotetext{In this simulation only we rescaled the RandomGames so that each payoff is a draw from $\mathit{Unif}(0, 1)$.}

We are interested in the replicator dynamic for two reasons. First, it is a model for how agents in the real world may collectively arrive at a symmetric solution to a game (e.g., through evolutionary pressure). Second, it is a learning algorithm that performs local search in the space of symmetric strategies. In our experiments of Appendix \ref{app:gamut_algorithm_optimality}, we find that using the replicator dynamic as an optimization algorithm is competitive with Sequential Least SQuares Programming (SLSQP), a local search method from the constrained optimization literature \citep{kraft1988software,2020SciPy-NMeth}.

\subsection{What Fraction of Symmetric Optima are Local Optima in Possibly-asymmetric Strategy Space?} \label{app:gamut_instability}

As discussed in \secref{sec:gamut_instability}, we would like to get a sense for how often symmetric optima are stable in the sense that they are also local optima in possibly-asymmetric strategy space (see \Cref{rem:instability}). \tabref{tab:max_mixed} shows in what fraction of games the best solution we found is \textit{unstable}.

\begin{table*}
    \centering
    \subfloat[RandomGame\label{tab:randomgame_max_mixed}]{
    \begin{tabular}{lrrrr}
\toprule
A &    2 &    3 &    4 &    5 \\
N &      &      &      &      \\
\midrule
2 & 0.36 & 0.44 & 0.44 & 0.50 \\
3 & 0.38 & 0.49 & 0.59 & 0.60 \\
4 & 0.42 & 0.45 & 0.46 & 0.46 \\
5 & 0.45 & 0.48 & 0.49 & 0.47 \\
\bottomrule
\end{tabular}

}
\subfloat[CoordinationGame\label{tab:coordinationgame_max_mixed}]{
    \begin{tabular}{lrrrr}
\toprule
A &  2 &  3 &  4 &  5 \\
N &    &    &    &    \\
\midrule
2 &  0 &  0 &  0 &  0 \\
3 &  0 &  0 &  0 &  0 \\
4 &  0 &  0 &  0 &  0 \\
5 &  0 &  0 &  0 &  0 \\
\bottomrule
\end{tabular}

}
        \caption{The fraction of games whose symmetric optima are mixed. By \Cref{thm:nondegeneracy_combined}, these symmetric equilibria are the ones \textit{unstable} in the sense of \Cref{rem:instability}. Numbers in the table were empirically determined from 100 randomly sampled games per GAMUT class.}
    \label{tab:max_mixed}
\end{table*}

\subsection{How Often do SLSQP and the Replicator Dynamic Find an Optimal Solution?} \label{app:gamut_algorithm_optimality}

As discussed in \secref{sec:gamut_optimality}, \tabref{tab:global_one_run_optimality} and \tabref{tab:global_many_run_optimality} show how often SLSQP finds an optimal solution, while \tabref{tab:replicator_one_run_optimality} and \tabref{tab:replicator_many_run_optimality} show how often the replicator dynamic finds an optimal solution.

\begin{table*}
    \centering
    \subfloat[RandomGame\label{tab:randomgame_global_one_run_optimality}]{
    \begin{tabular}{lrrrr}
\toprule
A &    2 &    3 &    4 &    5 \\
N &      &      &      &      \\
\midrule
2 & 0.92 & 0.81 & 0.70 & 0.64 \\
3 & 0.80 & 0.69 & 0.57 & 0.48 \\
4 & 0.75 & 0.57 & 0.40 & 0.35 \\
5 & 0.70 & 0.45 & 0.36 & 0.31 \\
\bottomrule
\end{tabular}

}
\subfloat[CoordinationGame\label{tab:coordinationgame_global_one_run_optimality}]{
    \begin{tabular}{lrrrr}
\toprule
A &    2 &    3 &    4 &    5 \\
N &      &      &      &      \\
\midrule
2 & 0.59 & 0.50 & 0.40 & 0.33 \\
3 & 0.53 & 0.38 & 0.28 & 0.29 \\
4 & 0.53 & 0.37 & 0.29 & 0.26 \\
5 & 0.53 & 0.36 & 0.33 & 0.25 \\
\bottomrule
\end{tabular}

}
        \caption{The fraction of single SLSQP runs that achieve the best solution found in our 20 total optimization attempts. Numbers in the table were empirically determined from 100 randomly sampled games per GAMUT class.}
    \label{tab:global_one_run_optimality}
\end{table*}

\begin{table*}
    \centering
    \subfloat[RandomGame\label{tab:randomgame_global_many_run_optimality}]{
    \begin{tabular}{lrrrr}
\toprule
A &    2 &    3 &    4 &    5 \\
N &      &      &      &      \\
\midrule
2 & 1.00 & 0.99 & 0.99 & 0.98 \\
3 & 1.00 & 0.99 & 1.00 & 0.96 \\
4 & 1.00 & 0.96 & 0.94 & 0.88 \\
5 & 0.98 & 0.90 & 0.88 & 0.91 \\
\bottomrule
\end{tabular}

}
\subfloat[CoordinationGame\label{tab:coordinationgame_global_many_run_optimality}]{
    \begin{tabular}{lrrrr}
\toprule
A &    2 &    3 &    4 &    5 \\
N &      &      &      &      \\
\midrule
2 & 0.99 & 1.00 & 0.98 & 0.97 \\
3 & 1.00 & 0.99 & 0.93 & 0.95 \\
4 & 1.00 & 0.97 & 0.97 & 0.93 \\
5 & 0.99 & 1.00 & 0.95 & 0.92 \\
\bottomrule
\end{tabular}

}
        \caption{The fraction of games in which at least 1 of 10 SLSQP runs achieves the best solution found in our 20 total optimization attempts. Numbers in the table were empirically determined from 100 randomly sampled games per GAMUT class.}
    \label{tab:global_many_run_optimality}
\end{table*}

\begin{table*}
    \centering
    \subfloat[RandomGame\label{tab:randomgame_replicator_one_run_optimality}]{
    \begin{tabular}{lrrrr}
\toprule
A &    2 &    3 &    4 &    5 \\
N &      &      &      &      \\
\midrule
2 & 0.93 & 0.81 & 0.68 & 0.65 \\
3 & 0.81 & 0.70 & 0.58 & 0.46 \\
4 & 0.76 & 0.58 & 0.36 & 0.34 \\
5 & 0.69 & 0.43 & 0.36 & 0.30 \\
\bottomrule
\end{tabular}

}
\subfloat[CoordinationGame\label{tab:coordinationgame_replicator_one_run_optimality}]{
    \begin{tabular}{lrrrr}
\toprule
A &    2 &    3 &    4 &    5 \\
N &      &      &      &      \\
\midrule
2 & 0.58 & 0.45 & 0.40 & 0.33 \\
3 & 0.57 & 0.35 & 0.29 & 0.27 \\
4 & 0.53 & 0.37 & 0.28 & 0.25 \\
5 & 0.51 & 0.33 & 0.33 & 0.24 \\
\bottomrule
\end{tabular}

}
        \caption{The fraction of single replicator dynamics runs that achieve the best solution found in our 20 total optimization attempts. Numbers in the table were empirically determined from 100 randomly sampled games per GAMUT class.}
    \label{tab:replicator_one_run_optimality}
\end{table*}

\begin{table*}
    \centering
    \subfloat[RandomGame\label{tab:randomgame_replicator_many_run_optimality}]{
    \begin{tabular}{lrrrr}
\toprule
A &    2 &    3 &    4 &    5 \\
N &      &      &      &      \\
\midrule
2 & 1.00 & 1.00 & 1.00 & 1.00 \\
3 & 0.99 & 1.00 & 0.95 & 0.96 \\
4 & 1.00 & 0.98 & 0.91 & 0.91 \\
5 & 0.98 & 0.97 & 0.92 & 0.87 \\
\bottomrule
\end{tabular}

}
\subfloat[CoordinationGame\label{tab:coordinationgame_replicator_many_run_optimality}]{
    \begin{tabular}{lrrrr}
\toprule
A &    2 &    3 &    4 &    5 \\
N &      &      &      &      \\
\midrule
2 & 1.00 & 1.00 & 0.99 & 0.94 \\
3 & 1.00 & 0.97 & 0.93 & 0.96 \\
4 & 0.99 & 1.00 & 0.93 & 0.92 \\
5 & 1.00 & 0.98 & 0.96 & 0.90 \\
\bottomrule
\end{tabular}

}
        \caption{The fraction games in which at least 1 of 10 replicator dynamics runs achieves the best solution found in our 20 total optimization attempts. Numbers in the table were empirically determined from 100 randomly sampled games per GAMUT class.}
    \label{tab:replicator_many_run_optimality}
\end{table*}

\subsection{How Costly is Payoff Perturbation under the Simultaneous Best Response Dynamic?} \label{app:gamut_epsilon_bribes}

When a game is not stable in the sense of \Cref{rem:instability}, we would like to understand how costly the worst-case $\epsilon$-perturbation of the game can be. (See \Cref{def:eps_perturbation} for the definition of an $\epsilon$-perturbation of a game.) In particular, we study the case when individuals simultaneously update their strategies in possibly-asymmetric ways by defining the following \textit{simultaneous best response dynamic}:
\begin{restatable}{definition}{defsimultaneousbestresponsedynamic}
The \textit{simultaneous best response dynamic at $s$} updates from strategy profile $s = (s_1, s_2, \ldots, s_n)$ to strategy profile $s' = (s_1', s_2', \ldots, s_n')$ with every $s_i'$ a best response to $s_{-i}$.
\end{restatable}
For each of the RandomGames in \secref{sec:gamut_instability} whose symmetric optimum $s$ is not a local optimum in possibly-asymmetric strategy space, we compute the worst-case $\epsilon$ payoff perturbation for infinitesimal $\epsilon$. Then, we update each player's strategy according to the simultaneous best response dynamic at $s$. This necessarily leads to a decrease in the original common payoff because the players take simultaneous updates on an objective that, after payoff perturbation, is no longer common. \tabref{tab:gamut_max_decrease} reports the average percentage decrease in expected utility, which ranges from 55\% to 89\%. Our results indicate that simultaneous best responses after payoff perturbation in RandomGames can be quite costly.

\begin{table*}
    \centering
    \subfloat[RandomGame\label{tab:randomgame_max_decrease}]{
    \begin{tabular}{lrrrr}
\toprule
A &     2 &     3 &     4 &     5 \\
N &       &       &       &       \\
\midrule
2 & 58.9\% & 55.9\% & 61.8\% & 64.6\% \\
3 & 73.7\% & 70.9\% & 73.4\% & 73.7\% \\
4 & 74.1\% & 77.4\% & 78.4\% & 82.5\% \\
5 & 77.4\% & 84.9\% & 89.9\% & 87.5\% \\
\bottomrule
\end{tabular}

}

        \caption{The average decrease in expected utility that worst-case infinitesimal asymmetric payoff perturbations cause to unstable symmetric optima. To get these numbers, we first perturb payoffs in the 100 RandomGames from \secref{sec:gamut_instability} whose symmetric optima $s$ are not local optima in possibly-asymmetric strategy space. Then, in each perturbed game, we compute a simultaneous best-response update to $s$ and record its decrease in expected utility.}
    \label{tab:gamut_max_decrease}
\end{table*}

\section{Code and Computational Resources} \label{app:code_and_compute}

All of our code is available at \url{https://github.com/scottemmons/coordination} under the MIT License. With a reduced number of random seeds, we guess that it would be possible to reproduce the experiments in this paper on a modern laptop. To test a large number of random seeds, we ran our experiments for a few days on an Amazon Web Services \texttt{c5.24xlarge} instance.

Our code uses the following Python libraries:
\begin{itemize}
    \item Matplotlib \citep{Hunter:2007}, released under ``a nonexclusive, royalty-free, world-wide license,''
    \item NumPy \citep{harris2020array}, released under the BSD 3-Clause ``New'' or ``Revised'' License,
    \item pandas \citep{jeff_reback_2021_4681666,mckinney-proc-scipy-2010}, released under the BSD 3-Clause ``New'' or ``Revised'' License,
    \item SciPy \citep{2020SciPy-NMeth}, released under the BSD 3-Clause ``New'' or ``Revised'' License, and
    \item SymPy \citep{10.7717/peerj-cs.103}, released under the New BSD License.
\end{itemize}

\end{document}